\begin{document}
\title{Committee Elections with Candidate Attribute Constraints}
%
%
\author{Aizhong Zhou\inst{1} \textsuperscript{(\Letter)} \and
Fengbo Wang\inst{2} \and
Jiong Guo\inst{3}}
\authorrunning{A. Zhou et al.}
%
\institute{
Computer Science and Technology, Ocean University of China
\email{{zhouaizhong}@ouc.edu.cn}\\
\and 
Computer Science and Technology, Ocean University of China\\
\email{{wfb}@stu.ouc.edu.cn}\\
\and
Computer Science and Technology, Shandong University\\
\email{JGuo@sdu.edu.cn}}
\maketitle              
%


\begin{abstract}
In many real-world applications of committee elections, the candidates are associated with certain attributes and the chosen committee is required to satisfy some constraints posed on the candidate attributes.
For instance, when dress collocation, it is generally acknowledged that when wearing a tie, you'd better wear a shirt, and wearing a suit, you'd better wear leather shoes.
Here, dresses are categorized by upper garment, lower garment, shoes et.al, and upper garment is with the attribute tie and shirt, lower garment is with the attribute suit, and shoes is with the attribute leather.
And two constraints ``tie infers shirt'' and ``suit infers leather shoes'' are proposed.
We study this variant of committee elections from the computational complexity viewpoint.
Given a set of candidates, each with some attributes and a profit, and a set of constraints, given as propositional logical expressions of the attributes, the task is to compute a set of $k$ candidates, whose attributes satisfy all constraints and whose total profit achieves a given bound.
We achieve a dichotomy concerning classical complexity with no length limit on constraints: the problem is polynomial-time solvable, if the following two conditions are fulfilled: 1) each candidate has only one attribute and 2) each attribute occurs at most once in the constraints.
It becomes NP-hard if one of the two conditions is violated.
Moreover, we examine its parameterized complexity.
The parameterization with the number of constraints, the size of the committee, or the total profit bound as parameter leads to para-NP-hardness or W[1]-hardness, while with the number of attributes or the number of candidates as parameter, the problem turns out to be fixed-parameter tractable.
Because of the para-NP-hardness result that only with one complex constraint, this problem is NP-hard, we continue to study the complexity of this problem where each constraint contains only few (two or three) attributes, and receive a dichotomous classical complexities.
\end{abstract}

\section{Introduction}
The problem of aggregating the preferences of different agents (voters) occurs in diverse situations in our real life.
This problem also plays a fundamental role in artificial intelligence and social choice~\cite{FVUJA,C-CACM-2010}.
While most cases studied are set to find out a single winner, voting can also be used to select a fixed-size set of winners (multiwinner), called committee.
Recently, committee election has been extensively studied from the axiomatic and algorithmic aspects~\cite{GNJEB,HSJSNT,ABCEFW-SCW-2017}.
Multi-winner voting rules have received a considerable amount of attention, due to their
significant applications in social choice~\cite{DKNS01,LB11,LB15,SFL-AI-2016}.

The problem of finding a committee with diversity constraints has also been formally studied in social choice.
Yang~\cite{YW-IJCAI-2018} and Kilgour~\cite{K-TD-2016} studied the multiwinner voting with restricted admissible sets where given a graph whose vertex set is the candidate set, the winners selected should satisfy some combinatorial restrictions in the graph, such as connected, fixed size, or independent.
Talmon~\cite{T-TCS-2018} and Igarashi et al.~\cite{TPE-AAAI-2017} studied the election by considering the relations between voters where voters are represented as vertices in a graph or a social network, and the winners should satisfy some combinatorial restrictions.
Furthermore, in several works, the problem of diverse committee selection is formulated as an election with candidates restrictions where candidates with attributes.
Aziz et al.~\cite{ALM-IJCAI-2016,AM-AAMAS-2020}, Lang and Skowron \cite{LS-AI-2018}, and Bredereck et al.~\cite{BFILS-AAAI-2018} studied the models seeking for a committee with quantity restriction, that for each attribute offers a certain representation or the occurrence of the attribute needs to be in the interval.
Celis et al. \cite{CHV-IJCAI-2018} introduced an algorithmic framework for multiwinner voting problems where there is an additional requirement that the selected committee should be fair with respect to a given set of attributes. 
They also presented several approximation algorithms and matching approximation hardness results for various attribute group structures and types of score functions (such as monotone and submodular score functions).
Recently, Aziz \cite{A-GDN-2019} proposed a cubic-time algorithm for committee election with soft diversity constraints, considering two axioms called type optimality and justified envy-freeness.

In this paper, we continue this line of research, that is, studying committee elections with input consisting of a set of candidates, each candidate associated with a profit and a set of attributes, and a set of attribute constraints.
We complement the previous work by proposing a model seeking for a committee, which maximizes the total profit and satisfies all constraints, which are defined as propositional logic expressions of attributes.
We call this new model Committee Election with Candidate Attribute Constraints (CECAC).
We analyze the classical and parameterized complexity of this model and obtain the following dichotomy concerning its classical complexity: If $1)$ each candidate has only one attribute and $2)$ each attribute occurs at most once in the constraints, then CECAC can be solved in polynomial time.
It becomes NP-hard, if one of $1)$ and $2)$ is violated.
Concerning parameterized complexity, we consider five parameters: the size of committee, the total profit bound, the number of attributes, the number of candidates, and the number of constraints.
The number of candidates is one of the standard parameters while studying parameterized complexity of election problems~\cite{BBCN-MARB-2012,BCFGNW-2014}; most committee election problems have been examined concerning the parameterization of the size of the committee~\cite{AFGST-IJCAI-2018,GJRSZ-AAMAS-2019,LG-AAMAS-2016,MNS-AAMAS-2015}.
The other three parameters are more problem-specific.
It is conceivable that most practical instances of CECAC admit relatively small values for the number of attributes and the number of constraints.
In addition, we continue to study the CECAC problem where the number of attributes occurring in each constraint is small, we call it as \emph{Simple Constraints}, and receive a dichotomous results for classical complexity.
Our results are summarized in Table 1.

\begin{table*}[]
\label{tab:constraints results}
\begin{tabular}{c|c|c|c|c}
\hline
{} & \multicolumn{2}{|c|}{Classical Complexity} & \multicolumn{2}{c}{Parameterized Complexity}\\
\hline
{} & {Conditions} & {Complexity} & {Parameter} & {Complexity}\\
\hline
{General Constraints} & {$|\alpha(c)|\leq 1$ and $N(a)\leq 1$} & {P} & {m} & {FPT}\\
\cline{2-5}
{} & {$|\alpha(c)|> 1$ or $N(a)\leq 1$} & {NP-h} & {$\ell$} & {FPT}\\
\hline
{Simple Constraints} & {$|\alpha(c)|\leq 1$, $N(a)\leq 2$ and {$N(r)\leq 2$}} & {P} & {d} & {para-NP-h}\\
\cline{2-5}
{} & {$|\alpha(c)|> 1$, $N(a)> 2$ or {$N(r)> 2$}} & {NP-h} & {$k$ and $p$} & {W[1]-h}\\
\hline
\end{tabular}
\centering\caption{Classiacl complexity of Committee Election with Candidates Attribute Constraints (CECAC) with simple constraints. CECAC is polynomial-time solvable if each candidate has at most one attribute $(|\alpha(c)|\leq 1)$, each attribute occurs at most 2 constraints $(N(a)\leq 2)$, and each simple constraint has at most two attribute $(L(r)\leq 2)$; otherwise, CECAC with simple constraints is NP-hard. 
}
\end{table*}

Our model shares certain common features with the previous works.
For instance, as in the model by Lang and Skowron~\cite{LS-AI-2018}, we completely ignore the votes (the support of each candidate from all votes is shown as the profit of this candidate) and focus on the influence of attributes and constraints on committee elections.
A majority of the previous work such as Lang and Skowron~\cite{LS-AI-2018}, Aziz et al.~\cite{ALM-IJCAI-2016}, Bredereck et al.~\cite{BFILS-AAAI-2018}, and Celis et al.~\cite{CHV-IJCAI-2018} considered the constraints which are defined independently for single attributes, for example, the number of committee members with a certain attribute.
In contrast, our model aims to introduce the logical relation of the attributes to the committee election problem.
Recently, Aziz~\cite{A-GDN-2019} considered the relation between the occurrences of attributes and the aim of their model is to satisfy as many constraints as possible.
Our model needs to select a committee satisfying all constraints, and focuses on the logical relation of attributes instead of quantity requirements posed on attributes.

\section{Preliminaries}~\label{sect:prel}
A committee election with candidate attribute constraints is denoted as a tuple~$(E=(C, A, R, S \alpha), k, p)$, where~$C$ is the set of candidates $C=\{c_1,c_2,\dots,c_m\}$, $A$ is the set of attributes $A=\{a_1, a_2,\dots, a_{\ell}\}$, $R$ is the set of constraints $R=\{r_1, \dots, r_d\}$, $S$ is the profit set that each candidate has a profit $S(c)\in S$, and $\alpha$ is an attribute function mapping the candidates to subsets of attributes, that is, $\alpha: C \rightarrow 2^{A}$, $k$ is the size of committee, $p$ is the profit bound of the committee.
We use $\alpha(c_i)$ to denote the set of attributes that $c_i$ owns and represent a candidate $c_i$ with $\alpha(c_i)=\{a_{i_1},\cdots, a_{i_{|\alpha(c_i)|}}\}$ as $c_i<a_{i_1},\cdots,a_{i_{|\alpha(c_i)|}}>$.
Let $S(c_i)$ be the profit of candidate $c_i$, and $S(C)$ be the total profit of candidates in $C$, $S(C)=\sum_{c_i \in C}S(c_i)$, if $C=\emptyset$, $S(C)=0$.
A subset of candidates $W \subset C$ with $|W|=k$ is called a $k$-committee.
Each constraint $r\in R$ is of the form $x_{1} \ \sigma_1 \ x_{2} \ \sigma_2 \ \cdots \ x_{t_1} \rightarrow x'_{1} \ \sigma'_1 \  x'_{2} \ \sigma'_2 \ \cdots \ x'_{t_2}$, where $\sigma_i, \sigma'_i \in \{\vee, \wedge\}$ and $x_i, x'_i$ are of the form $a_j$ or $\overline{a_j}$ for some attribute $a_j\in A$, for example, $r: a_1 \rightarrow \overline{a_2} \vee a_3$.
For the ease of presentation, we also use the notation $a_i$ to denote the Boolean variable, whose value is ``set" by a $k$-committee $W$, meaning that if one of the candidates in the committee owns the attribute $a_i$, then the corresponding variable $a_i$ is set to $1$ or TRUE; otherwise, $a_i$ is set to $0$ or FALSE. 
Accordingly, a $k$-committee  $W$ satisfying a constraint $r\in R$ means that the assignment of the Boolean variables in $r$ by $W$ evaluates $r$ to $1$ or TRUE.
For example, if $W$ satisfies $r: a_1 \rightarrow \overline{a_2} \vee a_3$, then $W$ does not contain a candidate $c$ with $a_1 \in \alpha(c)$, $W$ contains a candidate $c$ with $a_3\in \alpha(c)$, or $W$ does not contain a candidate with $a_2\in \alpha(c)$.

\subsection{Problem Definition}
We define the central problem of this paper.
\begin{quote}
    {\bf Committee Election with Candidate Attribute Constraints (CECAC)}\\
    {\bf Input}: An election $(E=(C, A, R, S, \alpha), k, p)$, each candidate $c\in C$ with some attributes $\alpha(c)\in  A$ and a profit $S(c)\in S$, a set of logical constraints $R$ of attributes, and two positive integers $k\leq |C|$ and $p$.\\
    {\bf Question}: Is there a committee $W \subseteq C$ with~$|W|=k$ satisfying
 all constraints in $R$ and $\sum_{c\in W}S(c)\geq p$?
\end{quote}

In this paper, we consider the classical as well as parameterized complexity of CECAC with respect to the following parameters: the number of candidates $m=|C|$, the number of attributes $\ell=|A|$, the number of constraints $d=|R|$, the size of committee $k=|W|$, and the total profit bound $p$.

Notice that there are two main difference between CECAC and the classical Satifiability and Constraint Satisfaction problems.
First, candidates associated with more than one attribute introduce certain assignment correlation of the variables corresponding to the attributes.
Second, the combination of candidates with different profit values and the bounded size of the committee causes a deliberate selection of candidates and thus more tricky assignment of variables in the constraints.

\subsection{Parameterized Complexity}
Parameterized complexity allows to give a more refined analysis of computational problems and in particular,
can provide a deep exploration of the connection between the problem complexity and various problem-specific parameters.
A fixed-parameter tracta-\\
ble (FPT) problem admits an $O(f(k) \cdot |I|^{O(1)})$-time algorithm, where~$I$ denotes the
whole input instance, $k$ is the parameter, and~$f$ can be any computable function.
Fixed-parameter intractable problems can be classified into many complexity classes, where the most fundamental ones are W[1] and W[2].
A problem is para-NP-hard with respect to parameter $k$, when the problem remains NP-hard even if $k$ is a  fixed constant.
For more details on parameterized complexity, we refer to ~\cite{CFKLMPPS15,DF99}.

\section{Classical Complexity}
In this section, we study the classical complexity of CECAC, presenting a dichotomy result classifying NP-hard and polynomial-time solvable cases with respect to the maximal number of attributes associated with a single candidate ${\max}_{c\in C}|\alpha(c)|$ and the maximal occurrence of an attribute in all constraints ${\max}_{a\in A}N(a)$, where $N(a)$ is the number of occurrence of $a$ or $\overline{a}$ in $R$.

\subsection{Polynomial-time solvable case}

\begin{theorem}
\label{A(c)-N(a)-p}
When the following two conditions are fulfilled, CECAC is solvable in polynomial time:\\
1) each candidate has at most one attribute, $\forall\ c \in C: |\alpha(c)|\leq 1$;\\
2) each attribute occurs at most once, $\forall\ a\in A: N(a)\leq 1$.
\end{theorem}
\begin{proof}
Let $(E=(C,A,R, S, \alpha), k,p)$ be a CECAC instance where $|C|=m$, $|A|=\ell$, and $|R|=d$.
We first make some modifications to the candidates in $C$, the attributes in $A$, and the constraints in $R$, such that each attribute belongs to at most one candidate.
Let $C(a_i)$ be the set of candidates with the attribute $a_i$ and $u_i=|C(a_i)|$.
We apply the following modification operations to each attribute $a_i$ with $u_i>1$:
\begin{itemize}
    \item Replacing $a_i$ by a set of new attributes $A_i=\{a^1_i, a^2_i, \cdots, a^{u_i}_i\}$,
    \item For each candidate $c<a_i>$, replacing $a_i$ by a unique attribute in $A_i$,
    \item In the constraint where $a_i$ or $\overline{a_i}$ occurs, replacing $a_i$ by $a^1_i \vee a^2_i \cdots \vee a^{u_i}_i$ and $\overline{a_i}$ by $\overline{a^1_i \vee a^2_i \cdots \vee a^{u_i}_i}$.
\end{itemize}
For example, there are three candidates with the attribute $a_1: c_1<a_1>, c_2<a_1>$,  $c_3<a_1>$, two candidates with the attribute $a_2: c_4<a_2>, c_5<a_2>$, and a constraint $r_1: a_1\rightarrow \overline{a_2}$.
We apply the above modification operations by creating three new attributes $a^1_1, a^2_1, a^3_1$, such that each of the three new attributes belongs to one of the three candidates:~$c_1<a^1_1>, c_2<a^2_1>, c_3<a^3_1>$.
Similarly, we create two attributes $a^1_2, a^2_2$, belonging to $c_4<a^1_2>$ and $c_5<a^2_2>$.
The constraint is changed to $r'_1: a^1_1 \vee a^2_1 \vee a^3_1 \rightarrow \overline{a_2^1 \vee a_2^2}$.
The two constraints $r_1$ and $r'_1$ express the same restriction that if one of the candidates $c_1, c_2, c_3$ is selected to the committee, then candidates $c_4$ and $c_5$ cannot be selected to the committee any more.
After the modification operation, we divide the candidates $c\in C$ into two parts $C^{+}$ and $C^{-}$: add a candidate $c_i<a_i>$ to $C^+$ if $a_i$ or $\overline{a_i}$ occurs in a constraint $r\in R$; otherwise, add $c_i<a_i>$ to $C^-$.
Obviously, choosing the candidates in $C^-$ or not has no influence on the satisfaction of constraints in $R$.

The main ideal of our algorithm consists of the following steps.
Firstly, we combine all constraints in $R$ into one propositional logical formula, denoted as $D(R)$, and construct a binary tree $T(D(R))$ of $D(R)$.
Then, according to the structure of $T(D(R))$, we use a bottom-up approach to select some candidates in $C^+$ to make sure $D(R)$ satisfied.
Next, we add some candidates from $C^-$ to $W$, or replace some selected candidates with the candidates in $C^-$ to achieve a $k$-committee $W$ with the highest total profit $S(W)$.
Finally, we decide whether there is a solution to the CECAC instance by comparing $S(W)$ with the total profit bound $p$.

In the following, we show the details of our algorithm.
Firstly, we combine all constraints in $R$ into one propositional logical formula $D(R)$.
For example, there are three constraints $r_1: a_1 \rightarrow a_2,\ r_2: a_3 \rightarrow a_4 \vee a_5,\ r_3: \overline{a_6} \rightarrow a_7$ in $R$.
We transform them into a formula: $ (\overline{a_1} \vee a_2) \wedge (\overline{a_3} \vee a_4 \vee a_5) \wedge ({a_6} \vee {a_7})$.
To illustrate the structure of $D(R)$, we construct a binary tree $T(D(R))$:
\begin{itemize}
    \item The root represents the whole formula $D(R)$,
    \item Each internal node is associated with a subformula of $D(R)$,
    \item The leaves represent the literals $a_i$ or $\overline{a_i}$.
\end{itemize}
The structure of $T(D(R))$ for the above example is shown in Figure~\ref{fig:binary tree}.
\begin{figure}
   \centering
    \center{\includegraphics[width=0.45\textwidth]{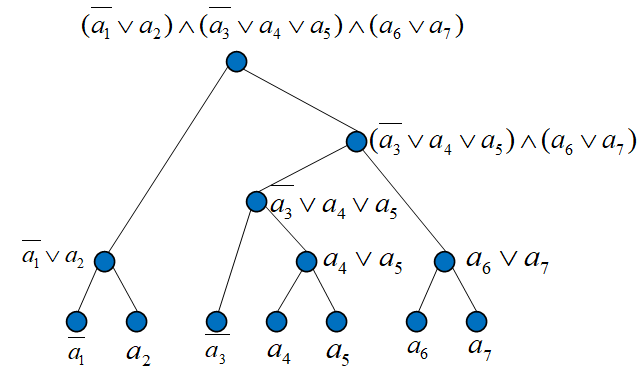}}  
    \caption{The root of the binary tree is set to $D(R)=(\overline{a_1}\vee a_2) \wedge (\overline{a_3} \vee a_4 \vee a_5) \wedge (a_6 \vee a_7)$, and each leaves is set to an attribute occuring in $D(R)$. In this binary tree, the subformulas $q$ and $q'$ of two children nodes can constitute the subformula of their father node by $q \vee q'$ or $q \wedge q'$.}
    \label{fig:binary tree}
\end{figure}
We use $Q^T(D(R))$ to denote the set of subformulas of $D(R)$ corresponding to the nodes in $T(D(R))$.
For the above example, $Q^T(D(R))=\{\overline{a_1}, a_2, \overline{a_3}, a_4, a_5, a_6, a_7, \overline{a_1}\vee a_2, a_6\vee a_7, a_4\vee a_5, \overline{a_3}\vee a_4 \vee a_5, (\overline{a_3}\vee a_4 \vee a_5)\wedge (a_6\vee a_7), (\overline{a_1}\vee a_2)\wedge (\overline{a_3}\vee a_4 \vee a_5)\wedge (a_6 \vee a_7) \}$.

Now, we show the details of the bottom-up approach.
Let $C(q)$ be the set of candidates whose attribute occurs in subformula $q$, $q\in Q^T(D(R))$.
For each internal node with a subformula $q$, we compute three multisets $V(q)=\bigcup_{0\leq j\leq i\leq k}\{V_i^j(q)\}$, $N(q)=\bigcup_{0\leq j\leq i\leq k}\{N_i^j(q)\}$ and $P(q)=\bigcup_{0\leq j\leq i\leq k}\{P_i^{i-j}(q)\}$.
Hereby, $N(q)$ and $P(q)$ are used to record the candidate set whose candidates cannot be replaced by the candidates in $C^-$ and the candidate set whose candidates can be replaced by the candidates in $C^-$, respectively.
The set $V_i^j(q)$ is set one-to-one corresponding to $N_i^j(q)$ and $P_i^{i-j}(q)$ that selecting total $i$ candidates with $j$ candidates in set $N_i^j(q)$ and $i-j$ candidates in $P_i^j(q)$.
The initialization of $N(q)$, $P(q)$ and $V(q)$ is as follows:
\begin{itemize}
    \item Set $N_i^j(q)=P_i^{i-j}(q)=\emptyset$ and $V_i^j(q)=-\infty$ for $0\leq j\leq i\leq k$ except for the following cases:\\
    $\textcircled{1}$ $q=a$ for $a\in A$: set $N_1^1(q)=\{c<a>\}$, $P_1^0(q)=\emptyset$, and $V_1^1(q)=S(c<a>)$;\\
    $\textcircled{2}$ $q=\overline{a}$ for $a\in A$: set $N_0^0(q)=P_0^0(q)=\emptyset$ and $V_0^0(q)=0$.
\end{itemize}
It is clear that $q=a$ is satisfied when $c<a>$ is chosen, and $c<a>$ is irreplaceable by the candidates in $C^-$.
Hereby, the highest total profit is clearly $S(c<a>)$.
Similarly, $q=\overline{a}$ is satisfied when $c<a>$ is not chosen and the highest total profit is 0.
Next, we show the details of how to satisfy subformula $q$ according to its two children $q', q''\in Q^T(D(R))$.
If $|C|=j$, let $T_j(C)$ be the subset of $C$ with the maximal total profit $S_j(C)$ and $|T_j(C)|=j$; otherwise, $T_j(C)=\emptyset$ and $S_j(C)=-\infty$.
\begin{itemize}
    \item (AND operation) If $q=q'\wedge q''$, then for each $0\leq j\leq i\leq k$, we set:
    \begin{equation*}
    \begin{split}
        V_i^j(q)=\max\{V_r^t(q')+V_{i-r}^{j-t}(q'')\ |\ 0\leq t
        \leq r\leq i\ {\tt and}\ 0\leq t\leq j\leq i\}
        \end{split}
    \end{equation*}
    Let $r, t$ be the combination achieving the maximal value $V_i^j(q)$.
    Set $N_i^j(q)=N_r^t(q')\cup N_{i-r}^{j-t}(q'')$ and $P_i^{i-j}(q)=P_r^{r-t}(q')\cup P_{i-r}^{(i-r)-(j-t)}(q'')$.
    \item (OR operation) If $q=q'\vee q''$, then for each $0\leq j\leq i\leq k$, we set:
$$
V_i^j(q)=\left\{
\begin{aligned}
&(1)\max\{V_r^t(q')+V_{i-r}^{j-t}(q'')\ |\ 0\leq t
        \leq r\leq i,\ 0\leq t\leq j\leq i\},& \\
&(2)\max\{V_r^j(q')+S_{i-r}(C(q''))\ |\ 0\leq j\leq r\leq i\},& \\
&(3)\max\{V_r^j(q'')+S_{i-r}(C(q'))\ |\ 0\leq j\leq r\leq i\}.& \\
\end{aligned}
\right.
$$
\begin{itemize}
    \item If the maximal value of $V_i^j(q)$ is achieved by (1) with some $t, r$, then set $N_i^j(q)=N_r^t(q')\cup N_{i-r}^{j-t}(q'')$ and $P_i^{i-j}(q)=P_r^{r-t}(q')\cup P_{i-r}^{(i-r)-(j-t)}(q'')$.
    \item If the maximal value of $V_i^j(q)$ is achieved by (2) with some $r$, then set $N_i^j(q)=N_r^j(q')$ and $P_i^{i-j}(q)=T_{i-j}(P_r^{r-j}(q')\cup C(q''))$.
    \item If the maximal value of $V_i^j(q)$ is achieved by (3) with some $r$, then set $N_i^j(q)=N_r^j(q'')$ and $P_i^{i-j}(q)=T_{i-j}(P_r^{r-j}(q'')\cup C(q'))$.
\end{itemize}
\end{itemize}

For AND operation with $q=q'\wedge q''$, both of $q'$ and $q''$ should be true in order to satisfy $q$.
We enumerate all combinations of $V_{r}^t(q')$ and $V_{i-r}^{j-t}(q'')$ with $0\leq t\leq r\leq i$ and $0\leq t\leq j \leq i$.
For OR operation with $q=q' \vee q''$, we consider three cases that both of $q'$ and $q''$, only $q'$, or only $q''$ is true.
$N_i^j(q)$ and $P_i^j(q)$ are set corresponding to the maximal value $V_i^j(q)$ of all cases.

By the above bottom-up approach, we achieve three sets: $V(D(R))$, $N(D(R))$, and $P(D(R))$ at the root.
For $N_i^j(D(R))$ and $P_i^{i-j}(D(R))$ with $i<k$, we add $k-i$ candidates in $C^-$ to $P_i^{i-j}(D(R))$ and achieve a $k$-committee.
Then replacing some candidates in $P_i^{i-j}(D(R))$ with the candidates in $C^-$ to achieve the highest total profit.
It means that the $k$-committee consists of $j$ candidates in $N_i^j(D(R))$ and $k-j$ candidates in $P_i^{i-j}(D(R))\cup C^-$.
Let $W$ be the $k$-committee with the highest total profit $S(W)=\max\{S(N_i^j(D(R)))+S_{k-j}(P_i^{i-j}(D(R))\cup C^-)\ |\ 0\leq j\leq i\leq k\}$. 
Finally, we decide whether there is a solution to the CECAC instance by comparing $S(W)$ with the total profit bound $p$.

Next, we prove the correctness and running time of the algorithm.
Concerning the running time, we make some modifications to the candidates in $C$, the attributes in $A$, and the constraints in $R$ to make sure that each attribute belongs to exactly one candidate.
The number of possible modification operations is bounded by $O(m\times d \times \ell)$.
The number of subformulas in $Q$ is bounded by $2 \times \ell$, since the number of nodes of a binary tree with $\ell$ leaves is at most $2 \times \ell$.
We compute the three sets $V^j_i(D(R)), N^j_i(D(R))$ and $P^j_i(D(R))$ by two types of operations.
The operations can be applied at most $O(\ell)$ times, one application for each internal node.
For each operation, the algorithm calculates $\frac{(k+1)(k+2)}{2}$ sets or values for each of  $V^j_i(D(R)), N^j_i(D(R))$ and $P^j_i(D(R))$ with $0 \leq j\leq i \leq k$, where each set or value can be achieved by at most $\frac{(k+1)(k+2)}{2}+2\times (k+1)$ comparisons.
So, the total number of comparisons is less than $O(\ell \times (k+1)^2\times(k+2)^2)$.
In addition, the value of $S(W)$ can be achieved by at most $\frac{(k+1)(k+2)}{2}$ comparisons.
In summary, the total running time is bounded by $O(k^4\times d \times \ell \times m)$.

Now, we turn to the correctness.
Obviously, the modifications to $C$, $A$, and $R$ have no affect on the output of the election, since the modified constraint set has the same restriction as the original one.
Let $C(q)$ be the set of candidates with attribute $a$ when $a$ or $\overline{a}$ occurs in $q$.
According to Conditions 1) and 2) of the theorem, choosing a candidate $c<a>$ or not is determined by the first subformula $q$ where $a$ or $\overline{a}$ occurs according to the bottom-up approach and the number of chosen candidates in $C(q)$.
The initialization of $V(q)$, $N(q)$ and $P(q)$ is clearly correct, since $q=a$ is satisfied when $c<a>$ is chosen with a total profit $S(c<a>)$ and no candidate can be replaced by the candidates in $C^-$, and $q=\overline{a}$ is satisfied when $c<a>$ is not chosen with a total profit 0. 
By AND and OR operations, we enumerate all satisfaction conditions of $q$.
And for each operation, we enumerate all combinations of candidates in $N(q)$ and $P(q)$.
In the following, we prove the correctness of AND and OR operations.
For the AND operation with $q=q'\wedge q''$, $q$ is true if and only if both of $q'$ and $q''$ are true.
For each $j$ and $i$ with $0\leq j\leq i\leq k$, we enumerate all combinations of $V^t_{r}(q')$ and $V^{j-t}_{i-r}(q'')$ with $0\leq t\leq r\leq i$ and $0\leq t\leq j\leq i$.
Each of the combination means that we choose $r$ candidates in $C(q')$ with $t$ candidates not replaceable and choose $i-r$ candidates in $C(q'')$ with $j-t$ candidates not replaceable.
We set $N^j_i(q)$ and $P^{i-j}_i(q)$ according to $V^j_i(q)$ where $V^{j}_i(q)=-\infty$ means $q$ cannot be satisfied by selecting exactly $i$ candidates in $C(q')$ including $j$ candidates which will not be replaced, and $V^{j}_i(q)\neq-\infty$ means there is a combination of $N^t_{r}(q')$, $P_r^{r-t}(q')$, $N_{i-r}^{j-t}(q'')$, and $P_{i-r}^{(i-r)-(j-t)}(q'')$ which can make $q$ true.
For the OR operation with $q=q'\vee q''$, $q$ can be true if and only if $q'$ is true or $q''$ is true.
We consider all three cases that both $q'$ and $q''$, only $q'$, or only $q''$ are satisfied.
For each $j$ and $i$ with $0\leq j\leq i\leq k$, we enumerate all $\frac{(k+1)(k+2)}{2}$ cases for both $q'$ and $q''$ being satisfied, and $2\times(k+1)$ cases for one of $q'$ or $q''$ being satisfied.
Similarly, $V^{j}_i(q)=-\infty$ means $q$ cannot be satisfied by choosing exactly $i$ candidates including $j$ irreplaceable candidates, and $V^{j}_i(q)\neq-\infty$ means there is a combination which can make $q$ being true.
For both of AND and OR operations, we compute $N(q)$, $P(q)$, and $V(q)$.
Finally, we arrive at three sets $V(D(R))$, $N(D(R))$, and $P(D(R))$ at the root.
We enumerate all $\frac{(k+1)(k+2)}{2}$ cases of solutions with $j$ $(0\leq j\leq i\leq k)$  candidates in $N_i^j(D(R))$ and $k-j$ candidates in $P_i^{i-j}(D(R))\cup C^-$, and achieve the highest total profit $S(W)$.
Therefore, if $S(W)<p$, there is no solution.
Otherwise, let $W$ be the set of candidates corresponding to $S(W)$.
Clearly, $W$ is a solution with $k$ candidates satisfying all constraints and its total profit is at least $p$.
This completes the proof of the correctness of the algorithm and thus, the proof of the theorem.
\end{proof}

\subsection{NP-hard cases}
\begin{theorem}
\label{A(c)-NP}
CECAC is NP-hard when each candidate has at most two attributes and each attribute occurs at most once in the constraints.
\end{theorem}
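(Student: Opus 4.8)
The plan is to give a polynomial-time many-one reduction from a known NP-hard graph problem, and the feature that makes the construction delicate is that condition~2 still holds here: every attribute may occur in at most one constraint, so the constraints act on pairwise-disjoint attribute sets, and the \emph{only} way two different constraints can interact is through a single candidate that owns one attribute from each of them. I would reduce from \textsc{Independent Set} on cubic graphs (equivalently \textsc{Vertex Cover}), where each vertex has exactly three incident edges and the problem is NP-hard. Since the committee size $k$ and the profit bound $p$ are the only genuinely global resources available, I expect the reduction to lean on them to couple the otherwise independent constraints.

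For each edge $e=\{u,v\}$ I would introduce two dedicated attributes $a_{u,e},a_{v,e}$ --- one \emph{endpoint token} per incident vertex --- together with a single conflict constraint $r_e:\overline{a_{u,e}}\vee\overline{a_{v,e}}$, which forbids both endpoints of $e$ from being claimed at once. Because each token appears only in its own $r_e$, the condition $N(a)\le 1$ is respected. A vertex $v$ is ``placed into the independent set'' by turning on all three of its endpoint tokens $a_{v,e_1},a_{v,e_2},a_{v,e_3}$; since a candidate owns at most two attributes, this forces me to represent each vertex by more than one candidate, and this is precisely where the relaxation of condition~1 (up to two attributes per candidate) is used. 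I would then assign candidate profits and choose $k$ and $p$ so that, in any committee meeting the bound, the selected candidates are forced to switch on a vertex's three tokens in an \emph{all-or-nothing} manner; the fully claimed vertices then form a set with no two adjacent (by the conflict constraints), i.e.\ an independent set, whose size equals the target $\kappa$.

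Completeness is the easy direction: from an independent set of size $\kappa$ I select the candidates realizing those vertices, observe that each $r_e$ holds because no edge joins two claimed vertices, and check that the profit exactly reaches $p$. Soundness is where the real work lies: I must show that every $k$-committee attaining profit $p$ induces a genuine independent set, which amounts to ruling out \emph{partial} claims, i.e.\ a vertex whose three tokens are not all switched on together. The main obstacle is exactly enforcing this per-vertex consistency: under $N(a)\le 1$ I cannot add a local equality (``all-or-nothing'') constraint over a vertex's three tokens, since those tokens are already spent inside the three edge constraints. I therefore expect consistency to be imposed globally, through a tight interplay of the exact committee size, the per-candidate profits, and the profit threshold, so that any profit-$p$ committee is driven into the canonical all-or-nothing shape. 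Choosing these numeric parameters carefully and proving that partial claims are strictly dominated is the crux; once that is established, the equivalence with \textsc{Independent Set}, and hence the claimed NP-hardness, follows.
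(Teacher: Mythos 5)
Your plan has a genuine gap, and it sits exactly where you yourself place ``the crux'': enforcing all-or-nothing vertex claims. The construction is never completed (no concrete candidates, profits, $k$, or $p$ are given), and the deferred step cannot be realized inside the framework you set up. All of your constraints are pure conflicts, $a_{u,e}\rightarrow\overline{a_{v,e}}$, and such constraints are \emph{monotone under deselection}: removing a candidate only switches attributes off, which can never violate a conflict. Hence every sub-selection of a feasible full claim of a vertex is itself feasible, switches on fewer tokens, and therefore conflicts with strictly fewer candidates elsewhere. Combined with additivity of profit, the single most profitable candidate inside a full claim has per-slot profit at least the full claim's average per-slot profit. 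So, for \emph{every} choice of numeric parameters, partial claims weakly dominate full claims both in profit per committee slot and in feasibility; a committee meeting the size and profit bounds can then be assembled from partial claims over a vertex set that merely needs each induced edge to be ``excluded'' by one endpoint's switched-off token --- a much weaker property than independence, satisfied by large sparse-induced sets in cubic graphs. The soundness direction therefore collapses: a profit-$p$ committee of size $k$ need not yield an independent set of size $\kappa$, and no ``tight interplay of committee size, profits and threshold'' can repair this, because the obstruction is structural (monotonicity plus additivity), not numerical.

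What is missing is a \emph{forcing} mechanism, and that is precisely what the paper's proof builds. The paper reduces from \textsc{Clique} on $D$-regular graphs and uses implication constraints $x\rightarrow y$ whose right-hand attributes are owned by separate candidates of profit $0$ or $-1$. Unlike conflicts, such implications are \emph{not} monotone under deselection: choosing a profitable edge candidate $c_j<x^1_j,x^2_j>$ forces, via $x^1_j\rightarrow x^3_j$ and $x^2_j\rightarrow x^4_j$ and then $y^h_i\rightarrow z^h_i$, a cascade of slot-consuming, unprofitable incidence candidates and, ultimately, the profit-$(-1)$ vertex candidates. With $k=\frac{5k'(k'-1)}{2}+k'$ and $p=\frac{k'(k'-1)(k'+1)}{2}-k'$, reaching the profit bound requires exactly $\binom{k'}{2}$ edge candidates whose cascades touch only $k'$ vertices, which happens exactly when those edges form a clique. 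If you want to salvage your reduction, you must replace the purely negative edge constraints by such positive implication gadgets (keeping $N(a)\leq 1$ by giving each implication its own fresh attributes) --- which in effect reconstructs the paper's proof; as a side benefit, that reduction is parameter-preserving and also yields the paper's W[1]-hardness with respect to $k$ and $p$.
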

\begin{proof}
We prove the theorem by reducing {\sc Clique} on $D$-regular graphs to CECAC, which given a~$D$-regular graph~$\mathcal{G}=(\mathcal{V},\mathcal{E})$ and an integer~$k'$ asks for a size-$k'$ clique.
Each vertex in $D$-regular graph has the same degree $D$.
A clique~$\mathcal{K}$ in an undirected graph~$\mathcal{G}$ is a subset of vertices, which form a complete graph.
In other words, the vertices in the clique~$\mathcal{K}$ are pairwisely adjacent in~$\mathcal{G}$.
{\sc Clique} is NP-hard and W[1]-hard with respect to the size of clique~\cite{DF99}.
We construct a CECAC instance~$(E=(C,A,R, S, \alpha),k,p)$ from~$(\mathcal{G}=(\mathcal{V},\mathcal{E}), k')$ as follows.

For each edge $e_j \in \mathcal{E}$, we create four attributes in $X_j=\{x^1_j$, $x^2_j, x^3_j, x^4_j\}$.
For each vertex $v_i \in \mathcal{V}$, we create $2D+2$ attributes in $Y_i \cup Z_i$: $Y_i=\bigcup_{0 \leq h \leq D}\{y^h_i\}$,\ $Z_i=\bigcup_{0 \leq h \leq D}\{z^h_i\}$.
Attributes $y^h_i, 1\leq h\leq D$, one-to-one corresponds to the $D$ edges incident to vertex $v_i$.
Let $A=(\bigcup_{1\leq j\leq |\mathcal{E}|}X_j) \cup (\bigcup_{0\leq i\leq |\mathcal{V}|}(Y_i \cup Z_i))$.

In addition, we create four sets of candidates.
For each edge $e_j$, we create an edge candidate in $C_1$: $c_j<x^1_j, x^2_j>$.
For each vertex $v_i$, we create $D$ vertex candidates in $C_2$, denoted as $c^{h}_{i}$ with $1 \leq h \leq D$, each of which corresponds to an edge incident to $v_i$ and has two attributes $c^h_{i}<x^3_{j}, y_i^h>$ or $c^h_{i}<x^4_{j}, y_i^h>$, where $x_j^3$ or $x_j^4$ corresponds to edge $y_i^h$; create $D$ candidates in $C_3$: $d^h_i<z^h_i, y^0_i>$ with $1\leq h \leq D$, and one candidate in $C_4$: $g_{i}<z^0_{i}>$.
Note that, each candidate in $C_1$ represents an edge, each candidate in $C_4$ represents a vertex, and the candidates in $C_2 \cup C_3$ represent the relation between vertices and edges.
Let $C=C_1 \cup C_2 \cup C_3 \cup C_4$.

Next, we construct the following constraints:
\begin{itemize}
    \item $R_1$=$\{x^1_j \rightarrow x^3_j, x^2_j \rightarrow x^4_j|\ e_j\in \mathcal{E}\}$, 
    \item $R_2$=$\{y^0_i \rightarrow z^0_i, \cdots, y^{D}_i \rightarrow z^{D}_i|\ v_i \in \mathcal{V}\}$.
\end{itemize}
Let $R=R_1 \cup R_2$.
It is clear that each candidate has at most two attributes and each attribute occurs at most once in the constraints.
We set the following profit function:
$$
S(c_i)=\left\{
\begin{aligned}
&&&k'+1,& &c_i \in C_1,& \\
&&&0,& &c_i \in C_2\cup C_3,& \\
&&&-1,& &c_i \in C_4.& \\
\end{aligned}
\right.
$$
Let $k:=\frac{5k'(k'-1)}{2}+k'$, $p:=\frac{k'(k'-1)(k'+1)}{2}-k'$.

``$\Longrightarrow$": Suppose that there is a size-$k'$ clique $\mathcal{K}$ in $\mathcal{G}$.
We select $\frac{k'(k'-1)}{2}$ candidates $C'_1 \subset C_1$ and $k'(k'-1)$ candidates $C'_2\subset C_2$, whose corresponding edges are between two vertices in $\mathcal{K}$.
And, we select $k'(k'-1)$ candidates $C'_3 \subset C_3$ which have the same corresponding edges as $C_2'$.
Furthermore, we select $k'$ candidates $C'_4 \subset C_4$ whose corresponding vertices are in $\mathcal{K}$.
The chosen candidate set is $W=C'_1 \cup C'_2 \cup C'_3 \cup C'_4$, $|W|=\frac{5k'(k'-1)}{2}+k'=k$, and the total profit of $W'$ is $\frac{k'(k'-1)(k'+1)}{2}-k'=p$.

Now, we check the satisfaction of the constraints.
The attributes of the selected candidates in $C'_1\cup C'_2$ satisfy the constraints in $R_1$, because these candidates correspond to the $\frac{k'(k'-1)}{2}$ edges with two endpoints in $\mathcal{K}$.
Furthermore, the selected candidates in $C'_3$ and $C'_4$ make sure that the constraints in $R_2$ are all satisfied, since the corresponding edges of $C'_3$ are incident to the $k'$ vertices corresponding to $C'_4$.
In summary, all constraints are satisfied by $W$.
So $W$ is a solution of CECAC.

``$\Longleftarrow$": Suppose that there is a solution $W$ of CECAC.
According to the construction of candidates and constraints, it is clear that when $W$ contains $\sigma$ candidates of $C_1$, there must be at least $2\sigma$ candidates of $C_2$, $2\sigma$ candidates of $C_3$, and some candidates of $C_4$ in $W$.
The selected candidates of $C_4$ correspond to the vertices which are incident to the edges corresponding to the selected candidates from $C_1$.
As a consequence, there are at most $\frac{k'(k'-1)}{2}$ $C_1$-candidates in $W$, since otherwise, the size of $W$ would exceed the committee size $k$.
In addition, to make sure the total profit of selected candidates satisfying $S(W)\geq p$, there must be at least $\frac{k'(k'-1)}{2}$ candidates of $C_1$ in $W$, since even if we do not select any candidates from $C_4$, the total profit of at most $\frac{k'(k'-1)}{2}-1$ $C_1$-candidates is always less than $p$: $(\frac{k'(k'-1)}{2}-1)(k'+1)<p$.
Therefore, there are exactly $\frac{k'(k'-1)}{2}$ candidates of $C_1$ (denoted as $C''_1$), $k'(k'-1)$ candidates of $C_2$, $k'(k'-1)$ candidates of $C_3$, and $k'$ candidates of $C_4$ (denoted as $C''_4$) in $W$.
More specifically, the $\frac{k'(k'-1)}{2}$ edges corresponding to $C''_1$ are only incident to the $k'$ vertices (denoted as $\mathcal{V'}$) corresponding to $C''_4$.
It is clear that $\mathcal{V'}$ is a size-$k'$ clique of $\mathcal{G}$.
\end{proof}

\begin{theorem}
\label{N(a)-NP}
CECAC is NP-hard when each attribute occurs at most twice in the constraints and each candidate has at most one attribute.
\end{theorem}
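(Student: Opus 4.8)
The plan is to prove NP-hardness by a reduction from \textsc{Clique} on $D$-regular graphs, the same source problem used in Theorem~\ref{A(c)-NP}. The qualitative difference from that proof is the restriction $|\alpha(c)|\le 1$: we can no longer use a single candidate carrying two attributes to \emph{couple} two variables, but we are now allowed to let each attribute occur \emph{twice} in $R$. The whole difficulty is therefore concentrated in a single place: a vertex $v_i$ of degree $D$ must be ``referenced'' by all of its $D$ incident edges, yet no attribute may occur more than twice. So the heart of the construction is a vertex gadget that distributes the presence of $v_i$ over $D$ private ``half-edge'' attributes and then aggregates them with exactly one further constraint.

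Concretely, I would build the instance $(E=(C,A,R,S,\alpha),k,p)$ from $(\mathcal{G}=(\mathcal{V},\mathcal{E}),k')$ as follows. For each edge $e_j\in\mathcal{E}$ create an \emph{edge candidate} $c_j<e_j>$ with profit $k'+1$. For each vertex $v_i$ and each of its $D$ incident edges create a \emph{half-edge candidate} $c_i^h<s_i^h>$ with profit $0$, where $s_i^h$ is a private attribute attached to the pair (vertex $v_i$, its $h$-th incident edge). For each vertex $v_i$ create a \emph{vertex candidate} $g_i<z_i>$ with profit $-1$. Every candidate owns exactly one attribute, so Condition~1) holds. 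The constraints are, for each edge $e_j=\{v_u,v_w\}$ with $e_j$ being the $h_u$-th edge at $v_u$ and the $h_w$-th edge at $v_w$, an \emph{edge-link} constraint $e_j\rightarrow s_u^{h_u}\wedge s_w^{h_w}$, and for each vertex $v_i$ a \emph{vertex-count} constraint $(s_i^1\vee s_i^2\vee\cdots\vee s_i^D)\rightarrow z_i$. The key bookkeeping is that $e_j$ occurs once (in its edge-link constraint), $z_i$ occurs once (in $v_i$'s vertex-count constraint), and each half-edge attribute $s_i^h$ occurs exactly twice --- once in the edge-link constraint of its edge and once in $v_i$'s vertex-count constraint --- which is precisely why $N(a)\le 2$ is respected. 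Finally set $k:=\tfrac{3k'(k'-1)}{2}+k'$ and $p:=\tfrac{k'(k'-1)(k'+1)}{2}-k'$.

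For the forward direction, a $k'$-clique $\mathcal{K}$ yields the committee consisting of the $\tfrac{k'(k'-1)}{2}$ edge candidates of $\mathcal{K}$, the $k'(k'-1)$ half-edge candidates forced by the edge-link constraints, and the $k'$ vertex candidates $g_i$ with $v_i\in\mathcal{K}$; its size is $k$, its profit is exactly $p$, and all constraints are satisfied since every half-edge outside $\mathcal{K}$ stays unselected. For the backward direction I would argue, writing $\sigma$ for the number of selected edge candidates and $\tau$ for the number of vertices touched by a selected half-edge: first, since every non-edge candidate has non-positive profit, the profit bound forces $\sigma\ge\tfrac{k'(k'-1)}{2}$; second, each selected edge forces its two (distinct) half-edges, and each touched vertex forces its vertex candidate, so $|W|\ge 3\sigma+\tau=k$ forces $\tau\le k'$; third, every selected edge has both endpoints in the touched set, so the simple-graph counting inequality $\sigma\le\binom{\tau}{2}\le\binom{k'}{2}$ pins down $\sigma=\binom{k'}{2}$ and $\tau=k'$, whence the $k'$ touched vertices carry $\binom{k'}{2}$ edges and therefore form a clique. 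I expect the main obstacle to be the simultaneous balancing act: choosing the profits and the values $k,p$ so that ``maximum profit within the committee budget'' coincides exactly with ``the edges of a $k'$-clique,'' while the vertex-count gadget keeps every attribute to at most two occurrences. Verifying that the profit bound cannot be met with fewer than $\binom{k'}{2}$ edges and that the size bound, through the forced half-edge and vertex candidates, translates into $\tau\le k'$ is the delicate step that makes the counting inequality bite.
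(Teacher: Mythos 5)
Your reduction is correct and is essentially the paper's own proof: the same source problem (\textsc{Clique} on $D$-regular graphs), the same three layers of candidates (edge, half-edge, vertex) with profits of signs $+,0,-$, the same committee size $k=\frac{3k'(k'-1)}{2}+k'$, and the same per-vertex aggregation constraint $(s_i^1\vee\cdots\vee s_i^D)\rightarrow z_i$, which is exactly the device the paper uses to keep every attribute to at most two occurrences. The only differences are cosmetic: the paper splits your single edge-link constraint into two implications $y_j\rightarrow x_i^{h_1}$ and $y_j\rightarrow x_{i'}^{h_2}$ (so the edge attribute occurs twice instead of once), and it uses unit edge profits with $p=\frac{k'(k'-1)}{2}-k'$, whereas your $(k'+1)$-scaled edge profits make the profit bound force $\sigma\ge\binom{k'}{2}$ in one step rather than through the paper's case analysis over the number of selected edge candidates.
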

\begin{proof}
We prove this theorem by giving a reduction from {\scshape Clique} on~$D$-regular graphs.
We construct a CECAC instance $(E=(C,A,R,S,\alpha),k,p)$ from $(\mathcal{G}=(\mathcal{V},\mathcal{E}), k')$ as follows.

Both the candidate set $C$ and the attribute set $A$ consist of three parts: $C=C_1 \cup C_2 \cup C_3$ and $A=A_1 \cup A_2 \cup A_3$.
For each edge $e_j \in \mathcal{E}$, we create an attribute $y_j$ in $A_1$ and a candidate with attribute $y_i$ in $C_1$: $c_j<y_j>$.
For each vertex $v_i \in \mathcal{V}$, we create $D$ attributes $X_i=\bigcup_{1 \leq h \leq D}\{x^h_i\}$ in $A_2$, one attribute $x^0_i$ in $A_3$, $D$ candidates $\bigcup_{1\leq t \leq D}\{c^t_i\}$ in $C_2$, and one candidate $c^0_i$ in $C_3$. 
The $D+1$ candidates of $C_2\cup C_3$ one-to-one correspond to the attributes for $v_i$ in $A_2 \cup A_3$, that is, $c^h_i<x^h_i>$ for $0\leq h \leq D$.
Note that the $D$ candidates in $\bigcup_{1\leq h \leq D}\{c^h_i<x^h_i>\}$ correspond to the $D$ edges incident to the vertex $v_i$ in $\mathcal{G}$.
Note that there are three candidates corresponding to every edge, that is, one candidate in $C_1$ and two candidates in $C_2$.
The $|\mathcal{V}|$ candidates in $C_3$ correspond to the vertices in $\mathcal{V}$. 
Furthermore, we create two sets of constraints as follows:
\begin{itemize}
    \item $R_1$=$\{y_j \rightarrow x^{h_1}_i$, $y_j \rightarrow x^{h_2}_{i'}|\ e_j=\{v_i, v_{i'}\}\in \mathcal{E}\}$;
    \item $R_2$=$\{x^1_i \vee x^2_i \vee \cdots \vee x^D_i \rightarrow x^0_i|\ v_i \in \mathcal{V}\}$.
\end{itemize}
Here, $y_j, x_i^{h_1}$, and $x_{i'}^{h_2}$ are the attributes created for $e_j\in \mathcal{E}$.
Set $R=R_1 \cup R_2$.
It is clear that each attribute occurs at most twice in the constraints and each candidate has only one attribute.
Each attribute $x_i^h$ for $1\leq h\leq D$ occurs once in $R_1$ and once in $R_2$, each attribute $y_j$ occurs twice in $R_1$ but does not occur in $R_2$, and attribute $x_i^0$ occurs only once in $R_2$ and does not occur in $R_1$. 
The profit of each candidate is defined as:
$$
S(c_i)=\left\{
\begin{aligned}
&&1&,& c_i \in C_1, \\
&&0 &,& c_i \in C_2, \\
&&-1&,& c_i \in C_3. \\
\end{aligned}
\right.
$$
Let $k:=\frac{3k'(k'-1)}{2}+k'$ and $p:=\frac{k'(k'-1)}{2}-k'$.
The equivalence of the two instance $\mathcal{G}=(\mathcal{V},\mathcal{E})$ and $(E=(C,A,R,S,\alpha),k,p)$ can be proved in a similar way as the one in theorem~\ref{A(c)-NP}.

``$\Longrightarrow$": Suppose that there is a size-$k'$ clique $\mathcal{K}$ in $\mathcal{G}$.
We selected $\frac{k'(k'-1)}{2}$ candidates $C'_1 \subset C_1$ and $k'(k'-1)$ candidates $C'_2 \subset C_2$, which correspond to the $\frac{k'(k'-1)}{2}$ edges in $\mathcal{K}$.
We also select $k'$ candidates $C_3'\subset C_3$, which correspond to the $k'$ vertices in $\mathcal{K}$.
Let $W=C_1'\cup C_2'\cup C_3'$ be the committee with $S(W)=\frac{k'(k'-1)}{2}-k'=p$ and $|W|=\frac{3k'(k'-1)}{2}+k'=k$.
Since the clique is a complete graph, each pair of vertices is adjacent.
For each selected candidates $c_j\in C'_1$, the two candidates of $C_2$ corresponding to the same edge $e_j$ are selected in $C'_2$.
Therefore, the constraints in $R_1$ are all satisfied.
Meanwhile, all of the $k'$ candidates whose corresponding vertices are incident to the corresponding edges of $C_1'\cup C_2'$ are selected.
This is the reason why all constraints in $R_2$ are satisfied.
Therefore, $W$ is a solution of CECAC.

``$\Longleftarrow$": Suppose that there is a solution $W$ of CECAC, $S(W) \geq p$ and $|W|=k$.
Assume that there are $\frac{k'(k'-1)}{2}+1$ candidates of $C_1$ in $W$, denoted as $C_1'$.
Since the number of the vertices incident to the $\frac{k'(k'-1)}{2}+1$ edges corresponding to $C'_1$ is at least $k'+1$, there must be at least $k'+1$ candidates of $C_3$ in $W$.
According to the construction of $R_1$, there are at least $k'(k'-1)+2$ candidates of $C_2$ in $W$ whose corresponding edges are the same as the corresponding edges of $C_1'$.
In summary, the number of candidates in $W$ is $\frac{k'(k'-1)}{2}+1+k'(k'-1)+2+k'+1>k$, contradicting to that $W$ is a solution of CECAC.
Assume that there are $g$ candidates of $C_1$ in $W$ with $\frac{(k'-1)(k'-2)}{2}\leq g \leq \frac{k'(k'-1)}{2}-1$.
Since the number of the vertices incident to these $g$ edges is at least $k'$, there must be at least $k'$ candidates of $C_3$ in $W$.
So, the total profits of $W$ is $S(W)\leq \frac{k'(k'-1)}{2}-1-k'<p$, again contradicting to that $W$ is a solution of CECAC.
Similarly, we get a contradiction when the number of selected candidates of $C_1$ is less than $\frac{(k'-1)(k'-2)}{2}$.
Therefore, there must be exactly $\frac{k'(k'-1)}{2}$ candidates of $C_1$, $k'(k'-1)$ candidates of $C_2$, and $k'$ candidates of $C_3$ in $W$.
Furthermore, the number of the edges $\mathcal{E'}$ corresponding to the selected candidates of $C_1$ is $\frac{k'(k'-1)}{2}$.
And the vertices $\mathcal{V'}$ corresponding to these $k'$ selected $C_3$-candidates are incident to the edges in $\mathcal{E'}$.
Therefore, $\mathcal{V'}$ is a size-$k'$ clique in $\mathcal{G}$.
\end{proof}

\section{Parameterized Complexity}
In the following, we study the parameterized complexity of CECAC.
The following theorem follows directly from the proof of Theorem~\ref{A(c)-NP} and the W[1]-hardness of {\sc Clique} with respect to the size of clique~\cite{DF99}.
\begin{theorem}
\label{k-p-W[1]}
CECAC is W[1]-hard with respect to the size of committee $k$ and the total profit bound~$p$.
\end{theorem}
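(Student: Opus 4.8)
The plan is to observe that the reduction already constructed in the proof of Theorem~\ref{A(c)-NP} is in fact a \emph{parameterized} reduction, and not merely a polynomial-time many-one reduction, so essentially no new work is required. First I would recall that, given a {\sc Clique} instance $(\mathcal{G}=(\mathcal{V},\mathcal{E}),k')$ on a $D$-regular graph, that proof builds a CECAC instance $(E=(C,A,R,S,\alpha),k,p)$ in polynomial time and establishes the equivalence that $\mathcal{G}$ has a size-$k'$ clique if and only if the constructed CECAC instance admits a solution. This equivalence is exactly what a reduction requires, so nothing about correctness needs to be reproved here.

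The crucial observation to extract is that the two parameters of the produced instance, $k = \frac{5k'(k'-1)}{2}+k'$ and $p = \frac{k'(k'-1)(k'+1)}{2}-k'$, depend only on the clique size $k'$ and not on $|\mathcal{V}|$, $|\mathcal{E}|$, or $D$. Hence the combined parameter $(k,p)$ of the output is bounded by a computable function of the input parameter $k'$. Together with the fact that the construction runs in polynomial (hence FPT) time, this certifies that the map is a genuine parameterized reduction from {\sc Clique} parameterized by the solution size to CECAC parameterized by $k$ and $p$ jointly.

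Finally I would invoke the W[1]-hardness of {\sc Clique} with respect to the size of the clique~\cite{DF99}, which holds on the $D$-regular graphs used in Theorem~\ref{A(c)-NP}, to transfer hardness across the reduction and conclude that CECAC is W[1]-hard with respect to $k$ and $p$. The only step demanding care is the parameter-bounding condition just highlighted: once one checks that both $k$ and $p$ are functions of $k'$ alone, the conclusion is immediate and no fresh combinatorial argument is needed. I do not expect a genuine obstacle beyond this bookkeeping, since the substantive content lives entirely in the already-proved Theorem~\ref{A(c)-NP}.
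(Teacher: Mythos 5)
Your proposal is correct and matches the paper's approach exactly: the paper also derives Theorem~\ref{k-p-W[1]} directly from the reduction in Theorem~\ref{A(c)-NP} together with the W[1]-hardness of {\sc Clique} with respect to clique size. The only detail you add explicitly---that $k$ and $p$ are functions of $k'$ alone, making the reduction parameterized---is precisely the bookkeeping the paper leaves implicit.
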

Next we consider the parameterized complexity of CECAC with the number of attributes as parameter. 
\begin{theorem}
\label{l-FPT}
CECAC is FPT with respect to the number of attributes~$\ell$.
\end{theorem}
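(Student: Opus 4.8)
The plan is to show that CECAC is fixed-parameter tractable with respect to the number of attributes $\ell$ by branching over all possible \emph{attribute patterns} that a committee can realize and, for each fixed pattern, reducing the remaining task to a simple greedy/profit-maximization computation. The key observation is that whether a committee $W$ satisfies the constraints in $R$ depends only on the set of attributes that are ``switched on'' by $W$, i.e., on the subset $A^* \subseteq A$ of attributes owned by at least one member of $W$. Since $|A| = \ell$, there are at most $2^\ell$ such subsets. For each candidate subset $A^* \subseteq A$, I would first check in polynomial time whether the Boolean assignment that sets every attribute in $A^*$ to TRUE and every attribute in $A \setminus A^*$ to FALSE satisfies all constraints in $R$; this is a direct evaluation of the $d$ propositional formulas and costs $O(d \cdot \ell)$ time per pattern. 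Only satisfying patterns $A^*$ are retained as candidates for an actual solution.

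For each satisfying pattern $A^*$, the remaining problem is to select exactly $k$ candidates of maximum total profit such that the realized attribute set is \emph{exactly} $A^*$: every attribute in $A^*$ must be owned by at least one selected candidate, and no selected candidate may own an attribute outside $A^*$. First I would discard every candidate that owns any attribute in $A \setminus A^*$, since selecting such a candidate would switch on a forbidden attribute. Among the surviving candidates, I would classify each candidate $c$ by its attribute set $\alpha(c) \subseteq A^*$; because $|\alpha(c)| \subseteq A^*$ and $|A^*| \le \ell$, there are at most $2^\ell$ possible attribute types, which again is bounded in $\ell$. The task then becomes a coverage-plus-budget selection: cover all attributes in $A^*$ using selected candidates while choosing exactly $k$ candidates to maximize total profit. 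This is itself an FPT subproblem in $\ell$: I would branch over which attribute \emph{type} is used to cover each attribute of $A^*$ (or more cleanly, guess, for each attribute in $A^*$, a ``witness'' candidate type that covers it), giving $\ell^{O(\ell)}$ or $2^{O(\ell)}$ branches, fix one cheapest-possible witness per attribute, and then fill the committee up to size $k$ greedily with the highest-profit remaining admissible candidates.

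The main obstacle I expect is the interaction between the exact-size constraint $|W| = k$ and the coverage requirement that every attribute of $A^*$ be switched on: I cannot simply pick the $k$ most profitable admissible candidates, because that might fail to cover some required attribute, nor can I greedily cover first and top up, because a covering candidate might own \emph{several} attributes of $A^*$ and thereby alter which coverage choices remain feasible within the budget of $k$ slots. To handle this cleanly I would, for a fixed pattern $A^*$, guess the multiset of attribute types of the ``essential'' witness candidates that together cover $A^*$ (bounded by a function of $\ell$ since both the number of types and the number of attributes are bounded by $\ell$), verify that this guessed witness set actually covers $A^*$ and uses at most $k$ candidates, select the cheapest such witnesses of each needed type, and then complete the committee to size exactly $k$ with the highest-profit remaining admissible candidates. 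Maximizing over all guesses and all satisfying patterns $A^*$ and comparing the best total profit with $p$ decides the instance. Since there are $2^\ell$ patterns, each giving $\ell^{O(\ell)}$ witness guesses and only polynomial work per guess, the overall running time is $f(\ell) \cdot \mathrm{poly}(m,d,k)$, establishing the claimed fixed-parameter tractability.
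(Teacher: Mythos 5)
Your overall strategy is sound and genuinely different from the paper's: you first guess the realized attribute set $A^*$ (at most $2^\ell$ patterns) and then guess only the covering witnesses, whereas the paper enumerates directly the set of candidate \emph{types} (attribute vectors) occurring in the committee, at a cost of $2^{2^\ell}$ branches; for each guessed type set $T'$ the paper checks $|T'|\le k$ and constraint satisfaction, puts the highest-profit candidate of each type in $T'$ into $W$, and fills up to size $k$ with the most profitable remaining candidates of types in $T'$. Your decomposition, once made correct, would even be faster (roughly $2^{O(\ell^2)}\cdot\mathrm{poly}$ versus the paper's $2^{2^\ell}\cdot\mathrm{poly}$; note the guess count is $2^{O(\ell^2)}$, not $\ell^{O(\ell)}$, since each of up to $\ell$ attributes receives one of up to $2^\ell$ types).

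There is, however, a concrete erroneous step: you fix the \emph{cheapest} witness of each needed type and then top up greedily. In a profit-maximization problem the forced witnesses must be the \emph{highest-profit} candidates of their types; otherwise a low-profit witness occupies a committee slot that the greedy top-up cannot recover. Counterexample within one pattern: $A^*=\{a_1,a_2\}$, candidates $c_1\langle a_1\rangle$ with profit $10$, $c_2\langle a_1\rangle$ with profit $5$, $c_3\langle a_2\rangle$ with profit $7$, and $k=2$, $p=17$. The only coverage-feasible witness guess fitting in the budget is one candidate of type $\{a_1\}$ plus one of type $\{a_2\}$; choosing cheapest witnesses yields $\{c_2,c_3\}$ with profit $12$ and no room to top up, so your algorithm reports profit $12<p$ and answers ``no'', although $\{c_1,c_3\}$ is a valid committee with profit $17$. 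The fix is local: for each guessed witness type, take the top-profit candidates of that type (with the guessed multiplicity), then complete the committee greedily among all admissible candidates. Correctness then follows from an exchange argument: in an optimal committee for the pattern $A^*$, any witness of type $t$ can be swapped for the highest-profit candidate of type $t$ without changing the realized attribute set, the committee size, or decreasing the profit; after these swaps the remaining members are unconstrained, so greedy completion is optimal. (Also skip any guess for which fewer than $k$ admissible candidates exist; this is harmless.) With these repairs your proof goes through and establishes the theorem.
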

\begin{proof}
Recall that CECAC seeks for a size-$k$ subset~$W\subseteq C$ such that the attributes of the candidates in $W$ satisfy all constraints in $R$ and the total profit of the candidates in $W$ is at least $p$.
We use a length-$\ell$ vector to represent the attributes of each candidate where each component of the vector is $0$ or $1$.
For example, the attributes of $c_i<a_1, a_2>$ can be represented as $<1,1,0,\dots,0>$.
Thus, there are at most~$2^{\ell}$ different vector types.
We call a candidate $c$ is of type $t$ if $c's$ attributes can be represented as a type-$t$ vector, and a type-$t$ vector occurs in $W$ if there is a type-$t$ candidate in $W$.
Let~$T$ denote the set of all vector types, and for each type~$t \in T$, let $C(t)$ be the set of candidates of type $t$.

First, our algorithm enumerates all $2^{2^{\ell}}$ possibilities which vector types occur in the committee.
A vector type occurring in the committee means that the committee contains a candidate, whose attributes can be represented by the vector type.
For each possibility, let $T'$ be the set of vector types occurring in the committee, and $A'=\cup_{t\in T'}A(t)$, where $A(t)$ is the set of attributes whose corresponding components in $t$ are $1$.
If $|T'|>k$ or the attributes in $A'$ do not satisfy all constraints, then continue with the next possibility.
Otherwise, for each type $t\in T'$, add the candidates from $C(t)$ with the highest profit to $W$.
Then, if the number of candidates in $W$ is less than $k$, then add other candidates in~$\cup_{t\in T'}C(t)$ by the decreasing order of their profits to $W$ until there are exactly $k$ candidates in $W$.
Finally, our algorithm compares the total profit $S(W)$ with the total profit bound $p$.
If the total profit of candidates in the committee is less than $p$, we continue with the next possibility.
Otherwise, $W$ is a solution of CECAC.

Next, we prove the correctness and running time of the algorithm.
For each possibility, we can check easily whether the attributes in $A'$ satisfy all constraints in $O(k \times d)$ time.
Afterwards, adding $k$ candidates to $W$ can be done in $O(k \times m)$ time if all constraints are satisfied by the attributes $A'$.
There are at most $2^{2^{\ell}}$ possibilities.
Therefore, the total running time is bounded by $O(2^{2^{\ell}}\times k \times m \times d)$.
Now, we turn to the correctness.
Suppose that there is a solution $W$ for the CECAC instance.
It is clear that the attributes of candidates in $W$ satisfy all constraints, and the number of vector types occurring in $W$ does not exceed $k$ since there are $k$ candidates in $W$.
In addition, the occurrence of the vector types in $W$ is one of the $2^{2^{\ell}}$ possibilities.
For this possibility, our algorithm selects $k$ candidates, denoted as $W'$, satisfying all constraints.
Since our algorithm selects the highest profit candidate for each vector type occurring in the committee and the other candidates with a decreasing order of profits.
It must be $S(W')\geq S(W)\geq p$.
So, $W'$ is also a solution of this CECAC instance.
It means if there is a solution for the CECAC instance, our algorithm can always select a committee solution.
This completes the proof of the correctness of the algorithm.
\end{proof}

The parameterization with the number of constraints leads also to a hardness result. 
\begin{theorem}
\label{d-para-NP}
CECAC is NP-hard even when there is only one constraint.
\end{theorem}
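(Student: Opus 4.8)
The plan is to exploit the fact that the statement places no bound on the \emph{length} of a constraint: a single constraint may be an arbitrarily long propositional formula, and hence can encode the entire combinatorial structure of an NP-hard graph problem. I would reduce from the NP-hard \textsc{Independent Set} problem, which, given an undirected graph $\mathcal{G}=(\mathcal{V},\mathcal{E})$ and an integer $k'$, asks whether $\mathcal{G}$ contains $k'$ pairwise non-adjacent vertices.

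\emph{Construction.} For each vertex $v_i\in\mathcal{V}$ I would create one candidate $c_i<a_i>$ with a private attribute $a_i$ and profit $S(c_i)=1$; no other candidates are created, so $|C|=|\mathcal{V}|$ and every candidate carries exactly one attribute. I would also introduce one extra attribute $a_0$ that is owned by \emph{no} candidate, so that the Boolean variable $a_0$ is forced to FALSE and $\overline{a_0}$ is always TRUE under every committee. The single constraint is
$$r:\ \overline{a_0}\ \rightarrow\ \bigwedge_{\{v_i,v_j\}\in\mathcal{E}}\bigl(\overline{a_i}\vee\overline{a_j}\bigr),$$
whose antecedent is a tautology, so that $r$ is logically equivalent to the conjunction of all clauses $\overline{a_i}\vee\overline{a_j}$; each such clause forbids selecting both endpoints of the edge $\{v_i,v_j\}$. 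Finally I set $k:=k'$ and $p:=k'$. Clearly $|R|=1$, each candidate has one attribute, and the whole construction is polynomial.

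\emph{Correctness.} For the forward direction, given an independent set $\mathcal{K}$ of size $k'$, I select the $k'$ candidates corresponding to $\mathcal{K}$; this yields a committee of size $k$ with total profit $k'=p$, and since $\mathcal{K}$ is independent, no edge has both endpoints selected, so every clause $\overline{a_i}\vee\overline{a_j}$ evaluates to TRUE and $r$ is satisfied. For the reverse direction, a size-$k$ committee with profit at least $p=k'$ automatically consists of exactly $k'$ distinct candidates, hence of $k'$ distinct vertices (each candidate has profit $1$); satisfying $r$ means every clause $\overline{a_i}\vee\overline{a_j}$ holds, i.e.\ no two selected vertices are adjacent, so the selected vertices form a size-$k'$ independent set of $\mathcal{G}$.

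The only genuine subtlety --- and precisely the source of the para-NP-hardness --- is recognising that the definition of a constraint permits an unbounded formula, so that all $|\mathcal{E}|$ edge restrictions can be folded into a single constraint; the dummy attribute $a_0$ is merely a syntactic device that turns the mandatory implication form into an unconditional requirement on the consequent (equivalently, one could use a tautological antecedent such as $a_1\vee\overline{a_1}$). Everything else is routine bookkeeping. Note that this result does not contradict Theorem~\ref{A(c)-N(a)-p}: although each candidate carries a single attribute, each attribute $a_i$ now occurs $\deg(v_i)$ times in $r$, so condition~2) ($N(a)\le 1$) is violated.
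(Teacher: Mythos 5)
Your proof is correct and takes essentially the same approach as the paper: a reduction from \textsc{Independent Set} with one candidate per vertex, unit profits, $k=p=k'$, and a single constraint whose consequent is the conjunction of the edge clauses $\overline{a_i}\vee\overline{a_j}$. The only cosmetic difference is your tautological antecedent $\overline{a_0}$ via a never-owned dummy attribute, where the paper instead uses the antecedent $a_1\vee a_2\vee\cdots\vee a_{m'}$, which is likewise automatically true once any candidate is selected.
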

\begin{proof}
We prove the theorem with only one constraint by giving a reduction from {\scshape Independent Set}.
An independent set~$I$ of an undirected graph~$\mathcal{G}$ is a set of pairwisely non-adjacent vertices.
Given a graph~$\mathcal{G}=(\mathcal{V},\mathcal{E})$ and an integer~$k'$, {\scshape Independent Set} asks for a size-$k'$ independent set.
Let~$|\mathcal{V}|=m'$ and $|\mathcal{E}|=n'$ be the number of vertices and edges, respectively.
We construct a CECAC instance $(E=(C,A,R,S,\alpha),k,p)$ from $(\mathcal{G}=(\mathcal{V},\mathcal{E}), k')$ as follows.

For each vertex $v_i \in \mathcal{V}$, we create an attribute $a_i$ and a candidate with this attribute $c_i<a_i>$.
It is $|C|=|A|=|\mathcal{V}|=m'$.
Let $C=\bigcup_{v_i \in \mathcal{V}}\{c_i\}$ and $A=\bigcup_{v_i \in \mathcal{V}}\{a_i\}$.
There is only one constraint
$r: a_1 \vee a_2 \cdots \vee a_{m'} \rightarrow f(e_1)\wedge \cdots \wedge f(e_{n'})$,
where $f(e_j)=(\overline{a_r}\vee \overline{a_s})$ for an edge $e_j=\{v_{r}, v_{s}\}$ in $\mathcal{G}$.
The profit of each candidate is exactly one.
Let $k:=p:=k'$.

Now, we prove the equivalence between the two instances.
Clearly, the committee size and the total profit are always satisfied when selecting $k$ candidates since $k=p=k'$ and the profit of each candidate is $1$.
So, a candidate selected or not is only determined by the constraint $r$.  
According to the construction of $r$, it is clear that $a_1 \vee a_2 \cdots \vee a_{m'}$ is always satisfied if we choose any candidate. 
Furthermore, $f(e_1)\wedge \cdots \wedge f(e_{n'})$ is fulfilled only when all $f(e_j)$ are satisfied for $1\leq j\leq n'$.
Each $f(e_j)$ corresponds to an edge and it can be satisfied if the two candidates corresponding to the endpoints of $e_j$ are not both in $W$. 
It means that the constraint $r$ is satisfied if and only if there exist $k'$ pairwisely non-adjacent vertices.
In other words, there is a solution of CECAC if and only if there is an independent set of size $k'$ in $\mathcal{G}$.
Therefore, CECAC is NP-hard even when there is only one constraint.
\end{proof}
According to this NP-hardness result, we find out that when the structures of the constraint are complex (there are many attributes in a constraint), CECAC is NP-hard even if there is only one constraint.
So, we continue to consider the conditions in the following section where the number of attributes in each constraint is small.

\section{Election with simple attribute constraints}
In this section, we study the complexity of special CECAC problem where the number of attributes occurring in each constraint is limited.
Let $L(r)$ be the number of attributes in a constraint.
Here, we only consider the constraints that cannot be transformed into some other constraints with the same meaning.
For example, if a constraint is $a \vee a' \rightarrow a''$, we consider two constraints $a \rightarrow a''$ and $a' \rightarrow a''$ instead of $a \vee a' \rightarrow a''$. 
From theorem~\ref{A(c)-NP}, we can get the following theorem result straightly that CECAC is NP-hard when each candidate has at most two attributes, each attribute occurs at most once in the constraints, and at most two attributes occurs in a constraint.

\begin{theorem}
\label{A(c)-NP-Simple}
CECAC is NP-hard when each candidate has at most two attributes, each attribute occurs at most once in the constraints, and at most two attributes occurs in a constraint.
\end{theorem}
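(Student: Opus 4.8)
The plan is to observe that the NP-hardness follows immediately from the reduction already constructed in the proof of Theorem~\ref{A(c)-NP}, so no new reduction is needed. That earlier reduction maps {\sc Clique} on $D$-regular graphs to a CECAC instance whose candidates carry at most two attributes and whose attributes each occur at most once across all constraints. The only thing left to verify is the third, new condition: that every constraint mentions at most two attributes.

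First I would simply inspect the two constraint sets produced there. The set $R_1$ consists of constraints of the form $x^1_j \rightarrow x^3_j$ and $x^2_j \rightarrow x^4_j$, and the set $R_2$ consists of constraints of the form $y^h_i \rightarrow z^h_i$ for $0 \leq h \leq D$. Each of these is a single implication between two literals, so each constraint $r$ satisfies $L(r) = 2$. Hence the additional requirement $L(r) \leq 2$ holds automatically for the very instance built in Theorem~\ref{A(c)-NP}, with nothing to change.

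Next I would note that the two remaining conditions of the present statement are precisely the hypotheses that Theorem~\ref{A(c)-NP} already establishes for this construction: every candidate (edge candidates in $C_1$, vertex candidates in $C_2$, the candidates in $C_3$, and the single-attribute candidates in $C_4$) carries at most two attributes, and by the construction each attribute occurs in at most one constraint. Since all three structural conditions are met simultaneously by one and the same instance, and since the correctness of the reduction (the $\Longrightarrow$ and $\Longleftarrow$ directions relating size-$k'$ cliques to solutions of the constructed CECAC instance) was already proved for Theorem~\ref{A(c)-NP}, the equivalence carries over verbatim.

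There is essentially no genuine obstacle here: the entire content is the bookkeeping observation that the constraints in the earlier reduction are already binary implications. The only point one must be slightly careful about is to confirm that the convention adopted at the start of this section — replacing a disjunctive premise such as $a \vee a' \rightarrow a''$ by the separate constraints $a \rightarrow a''$ and $a' \rightarrow a''$ — does not inflate any constraint beyond two attributes; but since every constraint in $R_1 \cup R_2$ is already a two-literal implication, this normalization leaves them untouched. I would therefore conclude that CECAC remains NP-hard under the stated simple-constraint restriction.
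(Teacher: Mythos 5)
Your proposal is correct and matches the paper's own argument: the paper likewise derives Theorem~\ref{A(c)-NP-Simple} directly from the reduction in Theorem~\ref{A(c)-NP}, observing that every constraint in $R_1\cup R_2$ there is already a two-literal implication, so all three structural conditions hold for that very instance. No further changes are needed, exactly as you conclude.
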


Similar to the prove in theorem~\ref{N(a)-NP}, we can get the results that 1) CECAC is NP-hard when each attribute occurs at most two times in the constraints, each candidate has at most one attribute, and at most three attributes occurs in a constraint, 2) CECAC is NP-hard when each attribute occurs at most three times in the constraints, each candidate has at most one attribute, and at most two attributes occurs in a constraint, by doing some modifications for each constraints.
For example, given a constraint $r: x_i^1 \vee x_i^2 \vee x_i^3 \vee x_i^4 \rightarrow x_i^0$ in $R_2$ of theorem~\ref{N(a)-NP}, we construct $2$ new attributes $x_{\alpha1}, x_{\alpha2}$ and change the $r$ into the following $3$ constraints:  $x_i^1 \vee x_i^2 \rightarrow x_{\alpha1}$, $x_i^3 \vee x_i^4 \rightarrow x_{\alpha2}$, and $x_{\alpha1} \vee x_{\alpha2} \rightarrow x_i^0$.
The $3$ new constraints have the same meanings as $r$ that if one of the attribute  $x_i^1, x_i^2 , x_i^3$ and $x_i^4$ is chosen, the attribute $ x_i^0$ would also be chosen.
So, we can get the NP-hardness result in the following theorem.
\begin{theorem}
\label{N(a)-NP-Simple}
CECAC is NP-hard when each attribute occurs at most two times in the constraints, each candidate has at most one attribute, and at most three attributes occurs in a constraint.
\end{theorem}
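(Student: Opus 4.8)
The plan is to reduce from {\scshape Clique} on $D$-regular graphs, reusing the entire construction of Theorem~\ref{N(a)-NP} and only rewriting the ``wide'' implications of $R_2$, since those are the only constraints whose length exceeds $3$. For each vertex $v_i$ I would introduce $D-2$ fresh attributes $w_i^2,\dots,w_i^{D-1}$, each carried by a single new candidate $c_{w_i^j}<w_i^j>$, and replace the single constraint $x_i^1\vee\cdots\vee x_i^D\to x_i^0$ by the chain
$$ x_i^1\vee x_i^2\to w_i^2,\qquad w_i^{j}\vee x_i^{j+1}\to w_i^{j+1}\ (2\le j\le D-2),\qquad w_i^{D-1}\vee x_i^D\to x_i^0. $$
This is exactly the generalization of the displayed $4\to 3$ example. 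The chain is logically equivalent to the wide disjunction: a committee makes $x_i^0$ true precisely when it makes some $x_i^h$ true, the intermediate $w_i^j$ merely propagating the signal. I then verify the three promised bounds, which is routine: every new constraint mentions exactly three attributes; each leaf $x_i^h$ now occurs once in $R_1$ and once in its chain, each auxiliary $w_i^j$ occurs exactly twice (as the head of one gate and in the body of the next), and $x_i^0$, $y_j$ occur at most twice, so $N(a)\le 2$; and no candidate receives a second attribute.

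For the parameters I would give every auxiliary candidate profit $0$, keep $p:=\frac{k'(k'-1)}{2}-k'$, and enlarge the committee size to $k:=\frac{3k'(k'-1)}{2}+k'+k'(D-2)$. The forward direction is then immediate: from a $k'$-clique $\mathcal{K}$ I take the same $C_1',C_2',C_3'$ as in Theorem~\ref{N(a)-NP} and additionally, for each of the $k'$ clique vertices, select \emph{all} $D-2$ of its auxiliary candidates. Every chain constraint is satisfied because each of its heads is chosen, the total profit is unchanged at $p$, and the size is exactly $k$.

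The reverse direction is where the real work lies, and it is the step I expect to be the main obstacle. Because the auxiliary candidates carry profit $0$, the profit inequality $S(W)=|W\cap C_1|-|W\cap C_3|\ge p$ is untouched, so the argument of Theorem~\ref{N(a)-NP} ruling out fewer than $\frac{k'(k'-1)}{2}$ selected edge candidates transfers verbatim. The difficulty is the opposite bound. The $k'(D-2)$ extra committee slots loosen the tight ``one extra edge already overflows $k$'' counting of Theorem~\ref{N(a)-NP}, and, crucially, a triggered leaf only forces the \emph{downstream suffix} of its chain; hence the number of auxiliary candidates a solution is forced to take is not determined by the set of active vertices alone but by which incident edges are chosen. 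Consequently a committee activating $k'+1$ vertices could a priori compensate the extra edge, linkage and vertex candidates by selecting correspondingly fewer auxiliary candidates, still hitting size exactly $k$ and profit exactly $p$ without the active vertices forming a clique.

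Overcoming this is the crux. The resolution I would pursue is to make the number of auxiliary candidates that an active vertex is forced to contribute \emph{rigid}, i.e.\ independent of which of its incident edges triggers the chain, so that activating a $(k'+1)$-st vertex provably forces strictly more than $k$ candidates in total and thus overflows the budget. Concretely one wants the chain (or a replacement fan-in-$2$ gadget) to guarantee a fixed quota of forced auxiliaries per touched vertex while still respecting $N(a)\le 2$ and $L(r)\le 3$; this rigidity is exactly what removes the ``tradeable'' slack exploited above. Once the auxiliary count is pinned down, the remaining counting is precisely that of Theorem~\ref{N(a)-NP}: exactly $\frac{k'(k'-1)}{2}$ edge candidates, $k'(k'-1)$ linkage candidates and $k'$ vertex candidates survive, the corresponding edges are incident only to $k'$ vertices, and those $k'$ vertices form the required clique, completing the equivalence.
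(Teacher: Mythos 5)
Your construction follows the same route as the paper's: reuse the {\scshape Clique} reduction of Theorem~\ref{N(a)-NP} and split each wide constraint $x_i^1\vee\cdots\vee x_i^D\rightarrow x_i^0$ into fan-in-two implications over fresh attributes (the paper splits into a balanced tree, $x_i^1\vee x_i^2\rightarrow x_{\alpha1}$, $x_i^3\vee x_i^4\rightarrow x_{\alpha2}$, $x_{\alpha1}\vee x_{\alpha2}\rightarrow x_i^0$, while you use a chain; that difference is cosmetic). Your verification of the three structural bounds and your forward direction are correct, and you are in fact more careful than the paper, which never even introduces carrier candidates for the fresh attributes or recomputes $k$ and $p$.

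But the proposal is not a proof, and the hole you flag yourself is fatal rather than routine. With $k=\frac{3k'(k'-1)}{2}+k'+k'(D-2)$, the counting of Theorem~\ref{N(a)-NP} does not survive. Suppose $\mathcal{G}$ contains a copy of $K_{k'+1}$ minus two disjoint edges (a subgraph with no $k'$-clique), and the unspecified leaf ordering places, for each vertex of this subgraph, its $d_i$ incident subgraph edges on the \emph{last} $d_i$ leaves of that vertex's chain. Selecting all $g=\binom{k'+1}{2}-2$ of its edge candidates, their $2g$ incidence candidates and the $k'+1$ vertex candidates then forces only $\sum_i(d_i-1)=2g-(k'+1)=k'^2-5$ auxiliaries rather than a quota of $D-2$ per active vertex; the remaining $k'(D-k'-5)+10$ slots can be filled with harmless extra auxiliaries of already-active vertices, and the profit is $\binom{k'}{2}-3\geq p$ for $k'\geq 3$. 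So whenever $D\geq k'+5$ --- precisely the regime in which {\scshape Clique} on $D$-regular graphs is hard, since any clique has size at most $D+1$ --- your instance can be a yes-instance although $\mathcal{G}$ has no $k'$-clique, i.e.\ the reduction as parameterized is wrong, not merely unverified. Closing it requires an actual new device of the kind you only gesture at: for example, hang on each $x_i^0$ a forced chain of more than $k'(D-2)$ profit-$0$ candidates (each attribute occurring twice, each constraint of length two), so that activating a $(k'+1)$-st vertex overflows $k$ no matter how few chain auxiliaries the committee manages to avoid; with that rigidity restored, the case analysis of Theorem~\ref{N(a)-NP} goes through. You are in good company --- the paper's own two-sentence argument asserts the rewritten constraints ``have the same meaning'' and silently skips exactly this step --- but identifying the obstacle is not the same as overcoming it.
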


Similarly, we can construct $3$ new attributes $x_{\alpha3}$, $x_{\alpha4}$, $x_{\alpha5}$ and change $r$ into the following $7$ constraints: $x_i^1\rightarrow x_{\alpha3}$, $x_i^2\rightarrow x_{\alpha3}$, $x_i^3\rightarrow x_{\alpha4}$, $x_i^4\rightarrow x_{\alpha4}$, $x_{\alpha3}\rightarrow x_{\alpha5}$, $x_{\alpha4}\rightarrow x_{\alpha5}$, $x_{\alpha5}\rightarrow x_i^0$.
The $7$ new constraints also means that if one of the attribute  $x_i^1, x_i^2 , x_i^3$ and $x_i^4$ is chosen, the attribute $x_i^0$ would also be chosen.
Therefore, according to theorem~\ref{N(a)-NP}, we can get the following theorem directly.

\begin{theorem}
\label{N-NP-Simple}
CECAC is NP-hard when each attribute occurs at most three times in the constraints, each candidate has at most one attribute, and at most two attributes occurs in a constraint.
\end{theorem}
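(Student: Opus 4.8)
The plan is to obtain this result by a small modification of the reduction in Theorem~\ref{N(a)-NP}, rather than designing a fresh reduction. There the only constraints with more than two attributes are those of $R_2$, namely $x_i^1 \vee x_i^2 \vee \cdots \vee x_i^D \rightarrow x_i^0$, one per vertex $v_i \in \mathcal{V}$, while every constraint of $R_1$ already has exactly two attributes. So it suffices to replace each such high-arity constraint by an equivalent collection of binary implications while controlling how often each attribute occurs afterwards.

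First I would decompose each $R_2$-constraint along a full binary tree. For vertex $v_i$, place $x_i^1, \ldots, x_i^D$ at the leaves, introduce $D-1$ fresh auxiliary attributes at the internal nodes, and for every internal node $w$ with children $u,v$ add the implications $u \rightarrow w$ and $v \rightarrow w$, finishing with $\rho \rightarrow x_i^0$ where $\rho$ is the root. This yields $2D-1$ binary constraints expressing exactly the original requirement: if any leaf $x_i^h$ is TRUE the value is propagated upward to $\rho$ and hence forces $x_i^0$, whereas if all leaves are FALSE nothing upstream is forced. For each new auxiliary attribute I would add a single fresh candidate carrying only that attribute with profit $0$, so that the auxiliary variable can actually be set to TRUE when a subtree below it is active. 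The bound $p$ is kept unchanged, and the committee size is enlarged to $k := \frac{3k'(k'-1)}{2} + k' + k'(D-1)$ to make room for the auxiliary candidates that a clique will force.

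The three structural conditions are then checked directly. Every $R_1$-constraint and every new implication has two attributes, so $L(r) \le 2$; every candidate carries a single attribute, so $|\alpha(c)| \le 1$; and the occurrences are: each leaf attribute $x_i^h$ appears once in $R_1$ and once as an antecedent in its tree (twice); each internal-node attribute appears twice as a consequent and once as an antecedent (three times); $x_i^0$ appears once; and each $y_j$ appears twice. Hence $N(a) \le 3$ throughout. For the forward direction I would take a $k'$-clique $\mathcal{K}$, select the same edge, endpoint and vertex candidates as in Theorem~\ref{N(a)-NP}, and additionally select all $D-1$ auxiliary candidates of each clique vertex; since every clique vertex is incident to a clique edge its $x_i^0$ is already selected, these profit-$0$ candidates are consistent with the new constraints, the profit stays $p$, and the size becomes exactly $k$.

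The backward direction, and specifically the committee-size accounting, is where the real work lies. The auxiliary candidates contribute $0$ profit but consume committee slots, so I must rule out an adversarial committee that uses them as filler to reach size $k$ without encoding a genuine clique. The key observations are that the binary tree forces auxiliary selections to be upward-closed, so selecting any internal-node candidate propagates to the root and forces the corresponding $x_i^0$ (a profit-$-1$ candidate of $C_3$); consequently auxiliary filler can be placed "for free" only on vertices whose $x_i^0$ is already selected, and each such vertex absorbs at most $D-1$ auxiliary candidates. Combining the profit bound — which, exactly as in Theorem~\ref{N(a)-NP}, forces at least $\binom{k'}{2}$ edge candidates incident to only $k'$ vertices and hence a clique, since $\binom{\nu}{2}-\nu$ is increasing for $\nu \ge 2$ — with this cap of $k'(D-1)$ on the usable auxiliary slots, the enlarged size bound is met only by the genuine clique together with full auxiliary padding. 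I expect making this counting fully rigorous, that is, showing the variable number of forced auxiliary candidates cannot be exploited, to be the main obstacle, whereas the logical equivalence of the decomposed constraints is routine.
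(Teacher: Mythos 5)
Your constraint decomposition is exactly the paper's: the paper also replaces each constraint $x_i^1\vee\cdots\vee x_i^D\rightarrow x_i^0$ by child-to-parent implications along a binary tree of fresh attributes (its $D=4$ example uses $x_{\alpha3},x_{\alpha4},x_{\alpha5}$ and seven binary constraints), and your occurrence count $N(a)\leq 3$ matches. You have also correctly spotted what the paper silently skips: fresh candidates carrying the auxiliary attributes must exist (otherwise every auxiliary variable is identically FALSE and selecting any leaf candidate violates its implication), and the committee size must then be revisited; the paper simply asserts the theorem follows ``directly'' from Theorem~\ref{N(a)-NP} without doing this.

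However, your concrete fix opens a genuine gap, and it is worse than the ``main obstacle'' you flag: with your parameters the backward direction is false, not merely unproved. In Theorem~\ref{N(a)-NP}, the conclusion that $W$ contains at most $\binom{k'}{2}$ edge candidates is forced by the \emph{tightness of the size bound} ($g\geq\binom{k'}{2}+1$ edge candidates force strictly more than $k$ members), not by the profit bound; profit only yields $g-n_3\geq\binom{k'}{2}-k'$, and this is equally achieved by $n_3=k'+1$ vertex candidates together with $g=\binom{k'}{2}+1$ edge candidates spanned by them --- a dense subgraph that need not contain any $k'$-clique (e.g., $K_{k'+1}$ minus a perfect matching, for odd $k'\geq3$). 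So your parenthetical claim that the profit bound ``forces at least $\binom{k'}{2}$ edge candidates incident to only $k'$ vertices'' inverts the logic: monotonicity of $\binom{\nu}{2}-\nu$ gives $n_3\geq k'$, not $n_3\leq k'$. Your enlarged size $k=\frac{3k'(k'-1)}{2}+k'+k'(D-1)$ destroys exactly the tightness the original argument needs, because the $k'(D-1)$ extra slots are fungible across vertices: a cheating committee can take the $\binom{k'}{2}+1$ edge candidates, their $2g$ forced leaf candidates, the $k'+1$ vertex candidates, the forced auxiliary paths (at most roughly $(k'+1)(k'-1)\lceil\log_2 D\rceil$ candidates for a balanced tree), and then pad with unforced, upward-closed auxiliary candidates at those same $k'+1$ vertices to hit size exactly $k$. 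This is feasible whenever the forced auxiliaries number at most $k'(D-1)-4$, i.e., whenever $D$ is large compared with $k'$ --- precisely the regime in which Clique on $D$-regular graphs is hard (for constant $D$ it is polynomial). Such a committee satisfies every constraint, has size $k$ and profit exactly $p$, yet certifies no $k'$-clique. Repairing the reduction needs an idea that makes the number of slots consumed by auxiliary candidates independent of how a solution distributes them over vertices (for instance, forcing a fixed quota of auxiliaries per vertex in every feasible committee, or re-weighting profits so padding can never pay for an extra vertex); profit-$0$, optional auxiliaries do not achieve this.
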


In the following, we present a polynomial-time algorithm to show that when each attribute occurs at most two times in the constraints, each candidate has at most one attribute, and at most two attributes occurs in a constraint, CECAC is polynomial-time solvable.

\begin{theorem}
\label{A(c)-N(a)-p-simple}
When the following three conditions are fulfilled, CECAC is solvable in polynomial time:\\
1) each candidate has at most one attribute, $\forall\ c \in C: |\alpha(c)|\leq 1$;\\
2) each attribute occurs at most twice in the constraints, $\forall\ a\in A: N(a)\leq 2$;\\
3) at most two attributes occurs in a constraint, $\forall r\in R: L(r)\leq 2$.
\end{theorem}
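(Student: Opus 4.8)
The plan is to exploit the very restricted combinatorial structure forced by the three conditions. Since each candidate carries at most one attribute, the candidate sets $C(a)$ for distinct attributes $a$ are pairwise disjoint, and the Boolean variable for $a$ is TRUE exactly when at least one candidate of $C(a)$ is placed in the committee. Because at most two attributes occur in each constraint ($L(r)\le 2$), every constraint either fixes the state of a single attribute (unit constraint) or specifies an allowed set of joint states in $\{0,1\}^2$ for a pair of attributes. I would then build the graph $H$ whose vertices are the attributes appearing in constraints and whose edges are the two-attribute constraints, recording unit constraints as state-fixing labels on vertices. The bound $N(a)\le 2$ forces every vertex of $H$ to have degree at most two, so $H$ is a disjoint union of paths and cycles. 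This max-degree-two structure is precisely what makes a polynomial dynamic program feasible, in contrast to the NP-hardness that appears once constraints may contain more attributes (Theorem~\ref{N(a)-NP}).

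Before the main computation I would preprocess the candidates: split $C$ into the \emph{free} candidates $C^-$ (those whose attribute occurs in no constraint, together with attribute-free candidates), whose membership never affects any constraint, and the \emph{constrained} candidates $C^+$. For each constrained attribute $a$ I sort $C(a)$ by profit and precompute $\mathrm{best}(a,q)$, the maximum total profit of any $q$ candidates of $C(a)$, so that $\mathrm{best}(a,0)=0$ and, for $q\ge 1$, $\mathrm{best}(a,q)$ is the sum of the $q$ largest profits in $C(a)$. By disjointness these tables are independent across attributes.

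The core is a per-component dynamic program. For a path $a_{i_1}-a_{i_2}-\cdots-a_{i_t}$ I compute a table $\mathit{DP}[j][s][c]$, the maximum total profit using the candidates of $a_{i_1},\dots,a_{i_j}$ such that $a_{i_j}$ has state $s\in\{0,1\}$, exactly $c$ candidates are selected from these attributes, and every constraint among them is satisfied. The transition from $j$ to $j{+}1$ ranges over all state pairs $(s,s')$ permitted by the constraint on the edge $a_{i_j}a_{i_{j+1}}$ and over the number $q$ of candidates taken from $C(a_{i_{j+1}})$, where $q=0$ is forced when $s'=0$ and $1\le q\le |C(a_{i_{j+1}})|$ when $s'=1$, adding $\mathrm{best}(a_{i_{j+1}},q)$ to the profit and $q$ to the count. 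Unit-constraint labels and empty candidate sets simply prune inadmissible states. For a cycle I would guess the state of its first attribute (two choices), run the path DP around it, and keep only the entries for which the closing-edge constraint holds. From the final layer I read off $f_{\mathrm{comp}}(c)$, the best profit using exactly $c$ candidates of that component.

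Finally I combine the components. Treating each $f_{\mathrm{comp}}$ together with the free-candidate profile (using exactly $r$ free candidates yields the sum of the $r$ largest profits in $C^-$) as profit-versus-count profiles, a standard knapsack-style convolution over the counts produces $S(W)$, the maximum profit of a committee of exactly $k$ candidates satisfying all constraints, and the instance is a yes-instance iff $S(W)\ge p$. Correctness follows because the DP ranges over all state assignments consistent with the constraint relation of each path and cycle, the top-$q$ choice is optimal for any fixed per-attribute count, and the convolution enforces the global requirement of exactly $k$ candidates while maximizing profit. Each table has $O(k^2)$ entries, there are $O(\ell)$ attributes and $O(d)$ constraints, and each transition costs $O(k\cdot m)$, so the whole procedure runs in time polynomial in $m$, $k$, $\ell$, and $d$. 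I expect the main obstacle to be this last coupling: the state-satisfaction along cycles must be reconciled with the global ``exactly $k$ candidates, maximum profit'' requirement, which is why the per-component counts are carried explicitly through the DP and merged by convolution rather than decided greedily.
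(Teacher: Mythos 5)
Your proposal is correct and, at the strategic level, it matches the paper's proof: both separate the free candidates $C^-$ from the constrained ones $C^+$, both exploit that $N(a)\leq 2$ together with $L(r)\leq 2$ forces the constraint structure to decompose into degree-at-most-two components (paths and cycles), both run a profit-and-count dynamic program along each component, and both finish by combining the components with the free candidates against the exact committee size $k$. Where you genuinely diverge is in the machinery, and your version is cleaner at two points. First, the paper does not work on the attribute graph directly: it merges implication chains $x\to x'\to\cdots\to x''$ into \emph{strings}, builds a directed graph whose vertices are strings (with an edge when the last literal of one string is the negation of the first literal of the next), and indexes its DP by the length $j$ of the unfulfilled prefix of each string, exploiting the fact that a satisfying assignment of an implication chain fulfills exactly a suffix of its literals; your per-attribute Boolean state with an allowed-pairs relation on each edge captures the same assignments while treating every two-attribute constraint uniformly (including negated literals), so no string preprocessing or case distinctions are needed. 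Second, the paper threads two auxiliary multisets $N$ and $P$ of ``irreplaceable'' and ``replaceable'' candidates through its DP so that, at the very end, some selected candidates can be swapped for higher-profit candidates of $C^-$; you avoid this machinery entirely because your DP carries exact per-component candidate counts and the final knapsack convolution with the top-$r$ profile of $C^-$ already optimizes over every split of $k$, which makes both the DP invariant and its correctness argument noticeably simpler. Two details to make explicit in a full write-up: components may contain parallel edges (two constraints on the same attribute pair), which your closing-edge check for cycles does handle, and profits may be negative (the paper's own hardness reductions use negative profits), which is harmless in your scheme precisely because counts are tracked exactly and the top-$q$ choice is optimal for each fixed $q$.
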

\begin{proof}
Let $(E=(C,A,R,S, \alpha), k, p)$ be a CECAC instance where $|C|=m$, $|A|=\ell$, and $|R|=d$.
Similar to the algorithm~\ref{A(c)-N(a)-p}, we first divide the candidates into $C^+$ and $C^-$: add a candidate $c_i<a_i>$ to $C^+$ if $a_i$ or $\overline{a_i}$ occurs in a constraint; otherwise, add $c_i<a_i>$ to $C^-$.
Our algorithm chooses some candidates from $C^+$ to satisfy all constraints firstly, and then, add or replace some chosen candidates by the candidates in $C^-$ to ensure the total profits at least being $p$.

Our algorithm consists of the following steps.
Firstly, we divide the constraints into different constraint \emph{clusters} where each two constraints from different clusters do not share a common attribute.
Secondly, we analyse the constraints in each cluster and show the logical relationships of these constraints.
Finally, we use a dynamic approach to find out the committee where each constraints are satisfied.
In the following, we show the details of our algorithm.

At first, according to the existence of each attribute in constraints, we divide constraints into different cluster.
Precisely, if attribute $a$ or $\overline{a}$ occurs in a constraint $r_i$, then all constraints containing $a$ or $\overline{a}$ are in the same constraint cluster with $r_i$.
It is clearly that each two clusters do not share a common attribute, and an attribute $a$ chosen or not has no influence on the choose of attribute $a'$ if they are in different clusters.
Therefore, we can analyse the constraints in a set individually, and combine the results from all clusters finally.

For each cluster, we first combine all constraints as follows.\\
\indent
\textcircled{1} If two constraints are $x \rightarrow x'$ and $x' \rightarrow x''$ where $x$, $x'$ and $x''$ are of the form $a$ or $\overline{a} \in A$, then we combine them together as $x\rightarrow x' \rightarrow x''$.
We call $x \rightarrow x' \rightarrow \cdots \rightarrow x''$ as a \emph{string}, denoted as $S_{x,x''}$.
Let $x$ and $x'$ be the first and last attributes of string $S_{x,x'}$, respectively.\\
\indent
\textcircled{2} For each string $S_{x,x'}$, we construct a vertex $v_{x,x'}$.
For each two vertices $v_{x,x'}$ and $v_{x'',x'''}$, if $x'=\overline{x''}$, we construct a directed edge from $v_{x,x'}$ to $v_{x'',x'''}$.
And, if two vertices $v_{x,x'}$ and $v_{x'',x'''}$ satisfy $x\in \{x'', \overline{x''}\}$ or $x' \in \{x''', \overline{x'''}\}$, we combine the two vertices together as a new vertex.

Here, we call attribute $a$ is \emph{fulfilled} if a candidate owns the attribute $a$ in the committee, and $\overline{a}$ is \emph{fulfilled} if all candidates do not own the attribute $a$ in the committee.
Since each attribute occurs at most twice in all constraints, all constraints in a cluster turn into a directed graph where the degree of each vertices is at most 2.
In the following, we show a dynamic approach to select $k$ candidates satisfying all constraints.

For each vertex $v_{x,x'}$: 
\begin{itemize}
    \item let $A^+(S_{x,x'}, j)$ be the set of attributes we need to choose to make the first $j$ attributes unfulfilled and the other $|S_{x,x'}|-j$ attributes fulfilled in $S_{x,x'}$; and $A^-(S_{x,x'}, j)$ be the set of attributes we cannot choose to make the first $j$ attributes unfulfilled and the other $|S_{x,x'}|-j$ attributes fulfilled in $S_{x,x'}$.
    \item let $C(A^+(S_{x,x'}, j))$ be the set of candidates whose attribute is in $A^+(S_{x,x'}, j)$; $C^+(A^+(S_{x,x'}, j))$ be the candidate set that containing a candidate, which owns the highest profit and attribute $a$, for each attribute $a\in A^+(S_{x,x'}, j)$; and $C^-(A^+\\
    (S_{x,x'}))=C(A^+(S_{x,x'}))\setminus{C^+(A^+(S_{x,x'}))}$.
\end{itemize}
We denote $H(i,C)$ being the subset of size-$i$ candidates of $C$ with the highest profits.

For each vertex $v_{x,x'}$, we consider all conditions where all constraints are satisfied for string $S_{x,x'}$, and compute three multisets $V=\bigcup V_i^j(v_{x,x'})$, $N=\bigcup N_i^j(v_{x,x'})$ and $P=\bigcup P_i^j(v_{x,x'})$ with $0\leq i\leq k$ and $0\leq j\leq |S_{x,x'}|$ for each condition.
Hereby, $N_i^j(v_{x,x'})$ and $P_i^j(v_{x,x'})$ are used to record the candidate set where candidates cannot be replaced by the candidates in $C^-$ and the candidate set where candidates can be replaced by the candidates in $C^-$ with selecting exactly $i$ candidates and first $j$ attributes unfulfilled for $S_{x,x'}$, respectively.
The set $V_i^j(v)$ is set one-to-one corresponding to $N_i^j(v)$ and $P_i^j(v)$.

Next, we first initialize the sets $V, N, P$ of each vertex, and then show the dynamic approach to make all constraints satisfied for all vertices in a cluster.
For a vertex $v_{x,x'}$:\\
\indent
1) when $x=x'$, we consider only two conditions that all attributes are fulfilled and no attributes are fulfilled;\\
\indent
2) when $x=\overline{x'}$, we consider $|S_{x,x'}|$ conditions with first $j$ attributes not fulfilled, $1\leq j\leq |S_{x,x'}|$;\\
\indent
3) when $x\notin \{x',\overline{x'}\}$, we consider $|S_{x,x'}|+1$ conditions with first $j$ attributes not fulfilled, $0\leq j\leq |S_{x,x'}|$.\\
\indent
For each of the condition, the elements in $N$, $P$ and $V$ are initialized as follows:
\begin{itemize}
    \item If $|C^+(A^+(S_{x,x'}, j))|\leq i\leq |C(A^+(S_{x,x'}, j))|$, we set $N_i^j(v_{x,x'})=C^+(A^+\\
    (S_{x,x'}, j))$, $P_i^j(v_{x,x'})=H(i-|A^+(S_{x,x'}, j)|, C^-(A^+(S_{x,x'}, j)))$, $V_i^j(v_{x,x'})=S(N_i^j(v_{x,x'})\cup P_i^j(v_{x,x'}))$;
    \item Otherwise, we set $N_i^j(v_{x,x'})=P_i^j(v_{x,x'})=\emptyset$, $V_i^j(v_{x,x'})=-\infty$.
\end{itemize}

Now, we show the details of how to satisfy all constraints of all vertices.
For each two vertices $v_{x,x'}$ and $v_{\overline{x'},x''}$, we do the following update operations for the sets $N$, $P$, and $V$ of $v_{\overline{x'},x''}$ from $v_{x,x'}$ as follows:
 \begin{equation*}
    \begin{split}
        V_{t+t'}^0(v_{\overline{x'},x''})&=\max\{V_{t}^{|S_{x,x'}|}(v_{x,x'})+V_{t'}^{0}(v_{\overline{x'},x''})\},\\
        V_{t+t'}^j(v_{\overline{x'},x''})&=\max\{V_{t}^{t''}(v_{x,x'})+V_{t'}^{j}(v_{\overline{x'},x''})\},
        \end{split}
    \end{equation*}
where $1\leq j\leq |S_{\overline{x'},x''}|$ and $0\leq t, t'\leq k, 0\leq t''\le |S_{x,x'}|$.

The update operations always keep the consistency of variable assignment, especially for the last attribute in $S_{x,x'}$ and first attribute in $S_{\overline{x'},x''}$.
\\
\indent
(1) The graph is a vertex $v$. 
The output of this cluster is the sets $N, P$ and $V$ of $v$ from the initialization.
\\
\indent
(2) The graph is a directed circle.
We start the above update operations at any vertex $v$, and do once update operation for each vertex.
The output of this cluster is the sets $N, P$ and $V$ of vertex $v$.
\\
\indent
(3) The graph is neither a vertex nor a directed circle.
We start the update operations from the vertices with the in-degree being 0 and end at the vertices with out-degree being 0.
Without loss generality, let $r$ be the number of sets $V$ got from the graph, $1\leq r \leq d$.
We can get the sets $N$, $P$, and $V$ of the cluster by combining each two of the $V$ sets from the out-degree being 0 vertices:
\begin{equation*}
    \begin{split}
        V_{i}=\max\{V_{t1}^{t1'}(v_{x1,x1'})+V_{t2}^{t2'}(v_{x2,x2'}\}
    \end{split}
\end{equation*}
where $i=t1+t2$ and $0\leq t1,t2 \leq k$, $0\leq t1'\leq |S_{x1,x1'}|+1$, $0\leq t2'\leq |S_{x2,x2'}|+1$.
The sets $N_i$ and $P_i$ are set corresponding to the $V_i$.

Similarly, we can get the sets $N$, $P$, and $V$ of each cluster. 
And then, we can combine all $V_i$ of each cluster together, and get the sets $N$, $P$ and $V$ which corresponding to all constraints in $R$.
Finally, we add some candidates in $C^-$ into the committee to make sure that there are exactly $k$ candidates, or use the candidates in $C^-$ to replace some candidates in $P$ to maximize the total profits.
We decide whether there is a solution to the CECAC instance by comparing $V_k$ with the total profit bound $p$.

Next, we prove the correctness and running time of the algorithm.
Concerning the running time, we divide the constraints into different sets and combines the constraints into strings.
These operations can be done in $O(d^3)$ times.
For each set, the directed graph corresponding to strings can be constructed in $O(d^3)$ times since there are at most $d$ strings.
For each vertices $S_{x,x'}$, we consider at most $|S_{x,x'}|$ conditions and calculate 3 values for each condition where each value can be computed in $O(k\times d)$ times.
For the dynamic process, each two connected vertices will do an update operation including at most $O(k^4)$ comparisons, and there are at most $d$ such connected vertices.
So, the update operations can be done in $O(k^4\times d)$ time.
Next, our method will consider all $N$, $P$, and $V$ sets from each out-degree being $0$ vertices and get final $N$, $P$, $V$ sets of the graph.
Here, we can consider each two $V$ sets at a time and do at most $d$ times.
For each time, we will do at most $O(k^2)$ times comparisons.
So, the dynamic process can be done in $O(k^4\times d)$ times.
There are at most $d$ sets to analyse and combining the results from the $d$ sets can be done in $O(d\times k^2)$ times.
Finally, adding the candidates $C^-$ into the committee or replacing some candidates in $P$ by the candidates in $C^-$ can be done in $O(m)$ times. 
In summary, the total running time is bounded by $O(k^4\times d^3\times m)$. 

Now, we turn to the correctness.
Obviously, dividing constraints into different sets and considering each set individually have no affect on the output of the election, since the each two sets do not share a common attribute.
According to Conditions 1), for the candidates with the attribute $a$, if we need to choose the attribute $a$, then the candidate with the highest profit must be chosen in $N$, and other candidates can be chosen in $P$ or not.
Otherwise, all of candidates with attribute $a$ can not be chosen.
According to Condition 2) and 3) of the theorem, the degree of each vertices in directed graph is at most $2$.
So, the directed graph must be a circle or has some in-degree being $0$ or out-degree being $0$ vertices.
The update operations can be done from each vertex when the graph is a circle, or start from the in-degree being $0$ vertices and end at the out-degree being $0$ vertices.
For each vertices, the initialization is obviously correct since $C^+(A^+(S_{x,x'}))$ are all the candidates that must be chosen and $C(A^+(S_{x,x'}))$ are all the candidates that we can choose.
For the update operations, we consider all conditions that choosing $i$ candidates including exactly $j$ candidates which can be replaced.
Meanwhile, we also consider the uniformity between the last attribute of the first string and the first attribute of the secondary string.
So, the update operations are correct.
Next, we consider all $O(k^2)$ comparisons of $N$, $P$, and $V$ sets for each two out-degree being $0$ vertices.
Therefore, we can get $N$, $P$, and $V$ sets of each constraint sets correctly. 
Similarly, we can also correctly combine all of these $N$, $P$, and $V$ from each constraint sets to get the $N$, $P$, and $V$ sets corresponding to $R$.
Finally, we add some candidates in $C^-$ into $P$, or use the candidates to replace the candidates in $P$.
In this way, we can make sure that the committee includes exactly $k$ candidates and is with the highest profits.
Therefore, if $V(k)<p$, there is no solution;
otherwise, the candidates in $N$ and $P$ corresponding to $V(k)$ is a solution which satisfies all constraints and the total profit is at least $p$.
This completes the proof of the correctness of the algorithm and thus, the proof of the theorem.
\end{proof}

\section{Conclusion}
In this paper, we investigate the problem of electing a committee, which is required to satisfy
some constraints given as logical expressions of the candidate attributes. 
We provide a dichotomy concerning its classical complexity with respect to the maximal number of attributes of a candidate and the maximal occurrence of attributes in the constraints. 
More precisely, it is polynomial-time solvable if each candidate has exactly one attribute and each attribute occurs once in the constraints; otherwise, it becomes NP-hard.
We also derive a collection of parameterized complexity results.
The problem is FPT with respect to the number of attributes, and is W[1]-hard or para-NP-hard with the size of committee, the total profit bound, or the number of constraints as parameter.

For the future work, it is interesting to examine the relation between the complexity and other structural parameters of constraints, such as the maximal length of constraints.
Another promising direction for future work might be the approximability of the optimization version of the problem.
There are two possible objective functions, the maximal number of satisfied constraints with certain total profit bound and the maximal total profit with all constraints being satisfied.
\clearpage


\begin{thebibliography}{10}
\providecommand{\url}[1]{\texttt{#1}}
\providecommand{\urlprefix}{URL }
\providecommand{\doi}[1]{https://doi.org/#1}

\bibitem{GNJEB}
Amanatidis, G., Barrot, N., Lang, J., Markakis, E., Ries, B.: Multiple referenda and multiwinner elections using hamming distances: Complexity and manipulability. In: AAMAS. pp. 715--723 (2015)

\bibitem{HSJSNT}
Aziz, H., Gaspers, S., Gudmundsson, J., Mackenzie, S., Mattei, N., Walsh, T.: Computational aspects of multi-winner approval voting. In: AAMAS. pp. 107--115 (2015)

\bibitem{A-GDN-2019}
Aziz, H.: A rule for committee selection with soft diversity constraints. CoRR  (2018)

\bibitem{ABCEFW-SCW-2017}
Aziz, H., Brill, M., Conitzer, V., Elkind, E., Freeman, R., Walsh, T.: Justified representation in approval-based committee voting. Social Choice and Welfare pp. 461--485 (2017)

\bibitem{AFGST-IJCAI-2018}
Aziz, H., Faliszewski, P., Grofman, B., Slinko, A., Talmon, N.: Egalitarian committee scoring rules. In: IJCAI. pp. 56--62 (2018)

\bibitem{ALM-IJCAI-2016}
Aziz, H., Lang, J., Monnot, J.: Computing pareto optimal committees. In: IJCAI. pp. 60--66 (2016)

\bibitem{AM-AAMAS-2020}
Aziz, H., Monnot, J.: Computing and testing pareto optimal committees. Auton. Agents Multi Agent Syst.  \textbf{34}(1), ~24 (2020)

\bibitem{BBCN-MARB-2012}
Betzler, N., Bredereck, R., Chen, J., Niedermeier, R.: Studies in computational aspects of voting - {A} parameterized complexity perspective. In: The Multivariate Algorithmic Revolution and Beyond - Essays Dedicated to Michael R. Fellows on the Occasion of His 60th Birthday. pp. 318--363 (2012)

\bibitem{FVUJA}
Brams, S.J., Kilgour, D.M.: Satisfaction approval voting. In: Voting Power and Procedures, pp. 323--346. Springer (2014)

\bibitem{BCFGNW-2014}
Bredereck, R., Chen, J., Faliszewski, P., Guo, J., Niedermeier, R., Woeginger, G.J.: Parameterized algorithmics for computational social choice: Nine research challenges. Tsinghua Science and Technology  \textbf{19}(4),  358--373 (2014)

\bibitem{BFILS-AAAI-2018}
Bredereck, R., Faliszewski, P., Igarashi, A., Lackner, M., Skowron, P.: Multiwinner elections with diversity constraints. In: AAAI. pp. 933--940 (2018)

\bibitem{CHV-IJCAI-2018}
Celis, L.E., Huang, L., Vishnoi, N.K.: Multiwinner voting with fairness constraints. In: IJCAI. pp. 144--151 (2018)

\bibitem{C-CACM-2010}
Conitzer, V.: Making decisions based on the preferences of multiple agents. Communications of the ACM  \textbf{53}(3),  84--94 (2010)

\bibitem{CFKLMPPS15}
Cygan, M., Fomin, F.V., Kowalik, L., Lokshtanov, D., Marx, D., Pilipczuk, M., Pilipczuk, M., Saurabh, S.: Parameterized Algorithms. Springer (2015)

\bibitem{DF99}
Downey, R.G., Fellows, M.R.: Parameterized Complexity. Springer (1999)

\bibitem{DKNS01}
Dwork, C., Kumar, R., Naor, M., Sivakumar, D.: Rank aggregation methods for the {Web}. In: WWW. pp. 613--622 (2001)

\bibitem{GJRSZ-AAMAS-2019}
Gupta, S., Jain, P., Roy, S., Saurabh, S., Zehavi, M.: Gehrlein stability in committee selection: Parameterized hardness and algorithms. In: AAMAS. pp. 511--519 (2019)

\bibitem{TPE-AAAI-2017}
Igarashi, A., Peters, D., Elkind, E.: Group activity selection on social networks. In: Proceedings of the Thirty-First {AAAI} Conference on Artificial Intelligence. pp. 565--571 (2017)

\bibitem{K-TD-2016}
Kilgour, D.M.: Approval elections with a variable number of winners. Theor. Decis. pp. 199--211 (2016)

\bibitem{LS-AI-2018}
Lang, J., Skowron, P.K.: Multi-attribute proportional representation. Artif. Intell  \textbf{263},  74--106 (2018)

\bibitem{LG-AAMAS-2016}
Liu, H., Guo, J.: Parameterized complexity of winner determination in minimax committee elections. In: AAMAS. pp. 341--349 (2016)

\bibitem{LB11}
Lu, T., Boutilier, C.: Budgeted social choice: From consensus to personalized decision making. In: IJCAI. pp. 280--286 (2011)

\bibitem{LB15}
Lu, T., Boutilier, C.: Value-directed compression of large-scale assignment problems. In: AAAI. pp. 1182--1190 (2015)

\bibitem{MNS-AAMAS-2015}
Misra, N., Nabeel, A., Singh, H.: On the parameterized complexity of {Minimax} {Approval} voting. In: AAMAS. pp. 97--105 (2015)

\bibitem{SFL-AI-2016}
Skowron, P., Faliszewski, P., Lang, J.: Finding a collective set of items: From proportional multirepresentation to group recommendation. Artif. Intell  \textbf{241},  191--216 (2016)

\bibitem{T-TCS-2018}
Talmon, N.: Structured proportional representation. Theor. Comput. Sci. pp. 58--74 (2018)

\bibitem{YW-IJCAI-2018}
Yang, Y., Wang, J.: Multiwinner voting with restricted admissible sets: Complexity and strategyproofness. In: Proceedings of the Twenty-Seventh International Joint Conference on Artificial Intelligence. pp. 576--582 (2018)

\end{thebibliography}

\end{document}